\providecommand{\tabularnewline}{\\}
\newtheorem{definitn}{Definition}
\newtheorem{remrk}{Remark}
\newtheorem{lemma}{Lemma}
\newtheorem{example}{Example}
\newtheorem{prop}{Proposition}
\newtheorem{thm}{Theorem}
\numberwithin{thm}{section}
\numberwithin{lemma}{section}
\numberwithin{prop}{section}
\begin{document}

\title{Asymptotic stability and capacity results for a broad family of power
adjustment rules: Expanded discussion }

\author{Virgilio Rodriguez$^{1}$, Rudolf Mathar$^{1}$ and Anke Schmeink$^{2}$
\\
 $^{1}$Theoretische Informationstechnik\\
$^{2}$UMIC Research Centre\\
 RWTH Aachen \\
 Aachen, Germany\\
email: vr@ieee.org, \{mathar@ti , schmeink@umic\}.rwth-aachen.de}

\maketitle
\begin{abstract}
In any wireless communication environment in which a transmitter creates
interference to the others, a system of non-linear equations arises.
Its form (for 2 terminals) is p1=g1(p2;a1) and p2=g2(p1;a2), with
p1, p2 power levels; a1, a2 quality-of-service (QoS) targets; and
g1, g2 functions akin to \char`\"{}interference functions\char`\"{}
in Yates (JSAC, 13(7):1341-1348, 1995). Two fundamental questions
are: (1) does the system have a solution?; and if so, (2) what is
it?. (Yates, 1995) shows that IF the system has a solution, AND the
{}``interference functions'' satisfy some simple properties, a {}``greedy''
power adjustment process will always converge to a solution. We show
that, if the power-adjustment functions have similar properties to
those of (Yates, 1995), and satisfy a condition of the simple form
gi(1,1,...,1)<1, then the system has a unique solution that can be
found iteratively. As examples, feasibility conditions for macro-diversity
and multiple-connection receptions are given. Informally speaking,
we complement (Yates, 1995) by adding the feasibility condition it
lacked. Our analysis is based on norm concepts, and the Banach's contraction-mapping
principle.
\end{abstract}
\IEEEpeerreviewmaketitle

\section{Introduction}

In any wireless communication environment in which a terminal creates
interference to the others, a system of non-linear equations (or more
generally inequalities) arises. It can be written as $p_{i}=g_{i}(\mathbf{p}_{-i};\alpha_{i})$
for $i=1,\cdots,N$, where $g_{i}$ is an appropriate function, $\alpha_{i}$
is a quality-of-service (QoS) target, and $\mathbf{p}_{-i}$ denotes
the vector of the power levels of the other terminals. Two fundamental
questions immediately arise: (i) does the system have solutions? (i.e.,
are the QoS targets {}``feasible''?); and if so, (ii) what is one
such solution? 

The feasibility question is critical, because if a set of terminals
is admitted into service when their QoS targets are {}``infeasible'',
valuable resources (e.g. time and energy) may be wasted in a futile
search for a power vector that does not actually exist. Thus, a formula
that can directly determine whether a set of QoS targets are feasible
has an evident practical utility: admission control. For example,
for the specific case of a CDMA wireless communication system in which
base stations {}``cooperate'', \cite{powcon_macrod_cap_hanly96}
shows that --- with some restrictions --- the QoS targets are feasible
if their sum is less than the number of receivers. The set of all
the QoS vectors that can be accommodated are associated with the {}``capacity
region'' of the system. 

An exact closed-form answer to the second question is available only
for very simple scenarios, such as the reverse link of an isolated
CDMA cell. However, the pertinent power vector may be found iteratively,
in which case, 2 other key questions arise: (i) does the chosen power
adjustment algorithm converge?, and if so, (ii) to the same point,
regardless of the initial powers? (i.e., is the process asymptotically
stable?). 

Reference \cite{powcon_stab_yates95} studies the convergence of a
{}``greedy'' power adjustment process --- terminals take turns,
each choosing a power level in order to achieve its desired QoS while
taking the other power levels as fixed --- within an abstract model
that {}``hides'' all details of the physical system inside the power-adjustment
functions, which are assumed to have certain simple properties. This
approach is important because its results apply to all practical systems
that can be shown to satisfy the assumed properties. Reference \cite{powcon_stab_yates95}
shows that if the {}``interference function'' is non-negative, non-decreasing,
and --- in certain sense --- (sub)homogeneous, greedy power adjustment
converges to a unique vector, \emph{provided} that the underlying
QoS targets are\emph{ feasible}. Recently, several publications have
revisited \cite{powcon_stab_yates95} with various aims. Reference
\cite{powcon_stab_discr_sung99} introduces and establishes the convergence
of an algorithm that can handle the discreteness (quantisation) of
power adjustment typical of practical systems, a case that does not
fit into the framework of \cite{powcon_stab_yates95}. Reference \cite{powcon_stab_discr_leung04}
extends \cite{powcon_stab_discr_sung99} to establish the convergence
of a {}``canonical class'' of algorithms, which includes many algorithms
previously proposed in the scientific literature. Opportunistic communication
as appropriate for delay-tolerant traffic is the focus of \cite{powcon_stab_oppor_sung05}.
Reference \cite{powcon_stab_schubert05} models interference within
an axiomatic framework and characterises the feasible quality-of-service
region corresponding to the max-min signal-to-interference balancing
problem. 

However, neither \cite{powcon_stab_yates95} nor its descendants provide
a general feasibility condition for their respective family of functions.
The present work adds sub-additivity to the properties of \cite{powcon_stab_yates95},
and this, in turn, leads to the simple feasibility condition $f_{i}(1,1,\ldots,1)<1$,
which also works without sub-additivity, but is then more conservative.
Particularised to \cite{powcon_macrod_cap_hanly96}, our result leads
to a still simple but more sophisticated macro-diversity capacity
formula that --- through a dependence on \emph{relative} channel gains
--- sensibly adjusts itself to the realistic situation in which each
terminal is in range of only a subset of the receivers, as discussed
further below and in \cite{powcon_macrod_cap_rodriguez08}. 

To obtain our result, we explore the convergence of a class of adjustment
functions of the form $f_{i}(\mathbf{p}_{-i})+c_{i}$ where $c_{i}\geq0$
and $f_{i}$ belongs to a large family of which {}``norms'' are
a special case. A norm is an intuitive generalisation of the length
of a vector. If $f_{i}$ is a norm, $f_{i}(\mathbf{p}_{-i})+c_{i}$
can be interpreted as a sum of {}``noise'' plus the {}``size''
of the interfering power, with the terminal adjusting its power to
keep the carrier-to-noise-plus-interference ratio near a target value.
The mathematical properties of norms and, more generally, semi-norms
are well-understood, and have proven useful in many contexts (for
an interesting application to beam forming see \cite{powcon_beam_norm_schubert04}).
As an added bonus, these functions are convex (and hence continuous),
which is often a desirable property.

Our related work \cite{powcon_stab_norm_rodriguez09,powcon_stab_homog_rodriguez09}
introduces a more concrete model that explicitly considers details
such as channel gains, QoS constraints, and the number of receivers.
The more abstract {}``high-level view'' of \cite{powcon_stab_yates95}
and its descendants ---including the present paper --- is evidently
the most general; however the {}``lower-level view'' of the concrete
model may provide insights and opportunities otherwise unavailable
(e.g., \cite{powcon_stab_homog_rodriguez09} provides in closed-form
a general \emph{conservative} solution to the system of equations
under study). Thus, both approaches are complementary. 

In the next section, we state and discuss our main results. Then,
we formally specify the properties of the functions we study. Subsequently,
we utilise the contraction mapping technique to characterise the conditions
leading to the convergence of an adjustment process done with these
functions. Then, we connect these conditions to the quality-of-service
requirements of the terminals. Subsequently, we apply our analysis
to other families of functions, including those studied by \cite{powcon_stab_yates95,powcon_stab_schubert05}.
Two appendices provide essential mathematical background, and specific
key results from the literature.

\section{Main result, applications and extensions }

\label{sec:powcon-stab-adj-disc-main}

In this section we \emph{informally} state our main result, discuss
how it can be directly applied, the methodology leading to it, and
how to extend it to cover functions satisfying other sets of axioms%
\footnote{Some readers may prefer to leave this section for last.%
}. We also discuss the macro-diversity capacity result it yields, and
compare this result to that provided by \cite{powcon_macrod_cap_hanly96}.

\subsection{Main result}

\label{sub:powcon-stab-adj-disc-reslt}

A function $f$ is \emph{quasi-semi-normal} if it has four basic properties
formally stated in Definition \ref{def:powcon-stab-adj-qseminormal}:
non-negativity, monotonicity, sub-homogeneity of degree one and sub-additivity
(the triangle inequality). With $\vec{1}$ denoting the {}``all-ones''
vector of appropriate length, our main result, Theorem \ref{thm:powcon-stab-adj-contr-cond},
can be \emph{informally} re-stated as: 

\begin{quotation}
\noindent If $f_{i}$ is quasi-semi-normal and satisfies $f_{i}(\vec{1})<1$
then the adjustment process defined by $p_{i}(t+1)=f_{i}(\mathbf{p}_{-i}(t))+c_{i}$
($c_{i}\geq0$) converges to the same vector $\mathbf{p}^{*}$, regardless
of the initial power levels.
\end{quotation}

\subsection{Direct applicability}

\label{sub:powcon-stab-adj-disc-appl}

$f_{i}$ generally depends on the terminals' quality-of-service (QoS)
parameters. Thus, from the set of conditions $f_{i}(\vec{1})<1$ one
can determine whether a given QoS vector is {}``feasible'' in the
sense that it leads to a convergent power-adjustment process. This
information answers the important admission-control question: can
the system admit a terminal that wishes service at a given QoS level,
and satisfy the QoS requirements of the new and the incumbent terminals? 

Theorem \ref{thm:powcon-stab-adj-contr-cond} can be very useful,
because a very large family of functions satisfies Definition \ref{def:powcon-stab-adj-qseminormal}.
This includes the sub-family of parametric H\"{o}lder norms (which
itself includes the Euclidean norm, the {}``max'' norm, as well
as the sum-of-absolute-values norm as special cases) and \emph{all}
other (semi-)norms. Furthermore, it is possible to define new (semi-)norms,
by performing simple operations on known ones; e.g., the sum or maximum
of two norms is a new norm, and if $f(\cdot)$ is a norm and $M$
is a suitably dimensioned non-singular matrix, then $f(M\cdot)$ defines
also a norm \cite[Sec. 5.3]{math_alg_lin_matrix_mg_horn85}. 

One can envision three general use-cases for Theorem \ref{thm:powcon-stab-adj-contr-cond}:
(i) the system's most {}``natural'' power adjustment process fits
the pattern $p_{i}=f_{i}(\mathbf{p}_{-i})+c_{i}$ with $f_{i}$ a
quasi-semi-normal function and $c_{i}\geq0$ (e.g., the fixed assignment
scenario of \cite{powcon_stab_yates95}) (ii) the engineer can freely
choose the terminals' power-adjustment rules (in which case the family
of functions under study is sufficiently large to give the engineer
wide latitude in making this choice), (iii) the engineer can analyse
the system under an adjustment rule that has the desired properties,
and  overestimates the {}``true'' terminal's power needs, which
leads to a conservative admission policy (as will be discussed further
and illustrated below).

\subsection{Methodology}

\label{sub:powcon-stab-adj-disc-method}

We obtain our results through fixed-point theory. One can formally
describe the power adjustment process through a \emph{transformation}
$\mathbf{T}$ that takes as input a power vector $\mathbf{p}$ and
{}``converts'' it into a new one, $\mathbf{T}(\mathbf{p})$. The
limit of the adjustment process, if any, is a vector that satisfies
$\mathbf{p^{*}}=\mathbf{T}(\mathbf{p}^{*})$. For a transformation
$\mathbf{T}$ from certain space into itself, fixed-point theory provides
conditions under which $\mathbf{T}$ has a {}``fixed-point'', that
is, there is a point $\mathbf{x}^{*}$ in the concerned space such
that $\mathbf{x}^{*}=\mathbf{T}(\mathbf{x}^{*})$. In particular,
Theorem \ref{thm:math-fixpt-banach1922} holds that, if $\mathbf{T}$
is a \emph{contraction} (Definition \ref{def:math-contract-map}),
then $\mathbf{T}$ has a \emph{unique} fixed-point, and that it can
be found iteratively via successive approximation (Definition \ref{def:math-succes-approx}),
irrespective of the starting point. Theorem \ref{thm:powcon-stab-adj-contr-cond}
identifies conditions under which the transformation of interest is
a contraction. The core of its proof has three simple steps, and each
directly follows from exactly one of the properties of the functions
we study.

\subsection{Applicability to other function families}

\label{sub:powcon-stab-adj-disc-other}

If one knows that an adjustment rule fails to satisfy Definition \ref{def:powcon-stab-adj-qseminormal},
but otherwise has certain {}``nice'' properties, two relevant and
fair question are: (a) does it always exist a corresponding adjustment
rule that overestimates a terminal's power needs, and that has the
necessary properties for the applicability of Theorem \ref{thm:powcon-stab-adj-contr-cond}?,
and (b) can such rule be identified in general, in terms of the original
function? The answers will, of course, depend on which are the properties
that the original adjustment rule does posses. 

\begin{table*}
\caption{\label{tab:powcon-stab-fws} Selected power-adjustment frameworks
compared}

\noindent \begin{centering}
\begin{tabular}{|c|c|c|c|c|}
\hline 
Framework & Monotonicity & Homogeneity & Sub-additivity & Feasibility\tabularnewline
\hline
\hline 
Yates\cite{powcon_stab_yates95} & $\mathbf{x}\geq\mathbf{y}\implies\mathbf{f}(\mathbf{x})\geq\mathbf{f}(\mathbf{y})$ & $\lambda>1\implies\mathbf{f}(\lambda\mathbf{x})<\lambda\mathbf{f}(\mathbf{x})$ & --- & ---\tabularnewline
\hline 
S-B\cite{powcon_stab_schubert05} & $\mathbf{x}\geq\mathbf{y}\implies f(\mathbf{x})\geq f(\mathbf{y})$ & $\lambda\geq0\implies f(\lambda\mathbf{x})=\lambda f(\mathbf{x})$ & --- & ---\tabularnewline
\hline 
Herein & $f(\mathbf{x})\leq f(\left\Vert \mathbf{x}\right\Vert _{\infty}\vec{1})$ & $\lambda\in(0,1)\implies f(\lambda\vec{1})\leq\lambda f(\vec{1})$ & $f(\mathbf{x}+\mathbf{y})\leq f(\mathbf{x})+f(\mathbf{y})$ & $f(\vec{1})\leq1$\tabularnewline
\hline
\end{tabular}
\par\end{centering}
\end{table*}

While ignoring certain technical subtleties, Table \ref{tab:powcon-stab-fws}
compares the properties assumed herein to those assumed by \cite{powcon_stab_yates95}
and \cite{powcon_stab_schubert05}. Non-negativity is an imposition
of the physical world that applies to all axiomatic frameworks. Additionally,
the three frameworks assume a form of monotonicity and homogeneity
({}``scalability''). Unique to the present contribution is the triangle
inequality, which in turn leads to a simple feasibility result not
available under other frameworks. This comparison suggests that, besides
non-negativity, monotonicity and some form of homogeneity be the {}``nice''
properties to be kept.

\subsubsection{Homogeneity notions}

\label{sub:powcon-stab-adj-disc-other-homog}

The homogeneity axioms displayed in Table \ref{tab:powcon-stab-fws}
exhibit noticeable differences. Whereas \cite{powcon_stab_schubert05}'s
homogeneity applies to all scaling constants, $\lambda$, our axiom
applies to $\lambda$ in $(0,1)$. However, by Lemma \ref{lem:powcon-stab-adj-qsemi-triang-subhom},
homogeneity for $\lambda\in(0,1)$ together with sub-additivity imply
homogeneity for all positive $\lambda$.

In \cite{powcon_stab_yates95} the considered functions are strictly
sub-homogeneous, but only for $\lambda>1$. However, \cite{powcon_stab_yates95}'s
{}``interference functions'' include additive {}``noise''. By
contrast, our functions have the form $f_{i}(\mathbf{x}_{-i})+c_{i}$,
and homogeneity applies to $f_{i}$ \emph{only}. If $f(x)=g(x)+c$
where $g(\lambda x)=\lambda g(x)$ and $c>0$, then $f$ is strictly
sub-homogeneous of degree one, but only for $\lambda>1$ {[}$f(\lambda x)\equiv g(\lambda x)+c=\lambda g(x)+c\equiv\lambda f(x)+(1-\lambda)c$;
thus, $f(\lambda x)<\lambda f(x)$ for $\lambda>1$]. 

Thus, while the homogeneity assumptions of \cite{powcon_stab_yates95},
\cite{powcon_stab_schubert05} and ours are \emph{not} technically
\emph{equivalent}, they are, to some extend, mutually consistent.
On the other hand, our functions only need homogeneity at the point
$\mathbf{x}=\vec{1}$.

\subsubsection{(sub)Homogeneous adjustment processes}

\label{sub:powcon-stab-adj-disc-other-boche}

Subsection \ref{sub:powcon-stab-adj-feas-xtens-boche} shows that
if the original adjustment function $f$ fails to satisfy the triangle
inequality, but that it is however monotonically non-decreasing and
(sub)homogeneous of degree one for any positive constant (which is
satisfied by all the functions considered by \cite{powcon_stab_schubert05},
for example), then $\phi(\mathbf{x}):=\left\Vert \mathbf{x}\right\Vert _{\infty}f(\vec{1})$
{}``dominates'' $f$ ($f(\mathbf{x})\leq\phi(\mathbf{x})$ everywhere),
and has the desired properties (because $\phi$ is a scaled version
of the norm $\left\Vert \mathbf{x}\right\Vert _{\infty}=\max(x_{1},\cdots,x_{N})$).
Thus, one can obtain a conservative admission rule by applying Theorem
\ref{thm:powcon-stab-adj-contr-cond} to an adjustment process in
which terminal $i$ updates its power with $\phi_{i}(\mathbf{x}):=\left\Vert \mathbf{x}\right\Vert _{\infty}f_{i}(\vec{1})$.
The appropriate feasibility condition is $\phi_{i}(\vec{1})\equiv\bigl\Vert\vec{1}\bigr\Vert_{\infty}f_{i}(\vec{1})=f_{i}(\vec{1})<1$.
Thus, $f_{i}(\vec{1})<1$ also works for the original process. However,
in this case the condition is more conservative than it would be,
if the original $f_{i}$ also satisfies sub-additivity, because now
the condition has been obtained through the dominating $\phi_{i}$. 

By exploiting the known special structure of the original adjustment
function, one may be able to obtain a {}``tighter bound'' than $\phi_{i}$.
In fact, that is how we have approached macro-diversity. Nevertheless,
it is useful and comforting to know that for a very large family of
functions, the construction $\phi_{i}$ leads to one simple capacity
result, when no better such result is available.

\subsubsection{Partially sub-homogeneous adjustment processes}

\label{sub:powcon-stab-adj-disc-other-yates}

Not \emph{every} function that satisfies \cite{powcon_stab_yates95}'s
axioms can be written as the sum of a positive constant and a function
that is homogeneous of degree one (see subsection \ref{sub:powcon-stab-adj-disc-other-homog}).
Nevertheless, subsection \ref{sub:powcon-stab-adj-feas-xtens-yates}
shows that the adjustment process corresponding to each of the models
cited by \cite{powcon_stab_yates95} (i) has the form assumed in the
present work, or (ii) can be handled through a special bounding function,
or (iii) --- under the mild assumption that random noise is negligible
--- is covered by the discussion in sub-section \ref{sub:powcon-stab-adj-disc-other-boche}.
One of \cite{powcon_stab_yates95}'s examples is macro-diversity ---
discussed at length throughout the present work ---, while the multiple
connection (MC) scenario is discussed in some detail in section \ref{sub:powcon-stab-adj-feas-xtens-yates}.

\subsection{The case of macro-diversity}

\label{sub:powcon-stab-adj-disc-macrod}

With macro-diversity, the cellular structure of a wireless communication
network is removed and each terminal is jointly decoded by all receivers
in the network \cite{powcon_phd_hanly93,powcon_macrod_cap_hanly96}.
Macro-diversity is interesting because it can increase the capacity
of a wireless cellular network, and mitigate shadow fading. As a proof-of-concept
scenario, we have applied Theorem \ref{thm:powcon-stab-adj-contr-cond}
to macro-diversity, and obtained a \emph{new} simple closed-form feasibility
condition, (\ref{eq:powcon-stab-adj-exa-macrod-feas-best}), which
has a number of advantages over that previously available. For a macro-diversity
system with $K$ receivers, and $N$ terminals operating on the reverse
link, where $\boldsymbol{\alpha}:=(\alpha_{1},\cdots,\alpha_{N})$
is the vector of desired carrier-to-interference ratios, $h_{i,k}$
the channel gain in the signal from terminal $i$ arriving at receiver
$k$, and $g_{i,k}=h_{i,k}/h_{i}$ with $h_{i}=h_{i,1}+\cdots+h_{i,K}$,
Theorem \ref{thm:powcon-stab-adj-contr-cond} dictates that: 

\begin{quotation}
\noindent if at each receiver $k$ and for each terminal $i$, $\sum_{n\neq i}\alpha_{n}g_{n,k}<1$
then it is possible for each terminal $i$ to operate at the CIR $\alpha_{i}$. 
\end{quotation}
Thus, the greatest weighted sum of $N-1$ carrier-to-interference
ratios must be less than 1, in order for $\boldsymbol{\alpha}$ to
lie in the {}``capacity region'' of the system. The weights are
relative channel gains. At most $NK$ such simple sums need to be
checked before an admission decision. 

Condition (\ref{eq:powcon-stab-adj-exa-macrod-feas-best}) is closest
to that provided by \cite{powcon_macrod_cap_hanly96} in the special
case in which each terminal is {}``equidistant'' from each receiver;
that is, for each $i$, $h_{i,k}\approx h_{i,l}\quad\forall k,l$
(for example, the terminals may be distributed along a line that is
perpendicular to the axis between the 2 symmetrically placed receivers).
In this case, each $g_{n,k}=1/K$, and condition (\ref{eq:powcon-stab-adj-exa-macrod-feas-best})
reduces to $\sum_{\substack{n=1\\
n\neq i}
}^{N}\alpha_{n}<K$ for each $i$ (which is consistent with condition (\ref{eq:powcon-stab-adj-exa-simp-feas-best}),
for $K=1$). $\sum_{\substack{n=1\\
n\neq i}
}^{N}\alpha_{n}$ adds all $\alpha_{i}$ except one; such sum is, evidently, largest
when it leaves out the smallest $\alpha_{i}$. By comparison, \cite{powcon_macrod_cap_hanly96}
gives the condition $\sum_{n=1}^{N}\alpha_{n}<K$ for all cases. Condition
(\ref{eq:powcon-stab-adj-exa-macrod-feas-best}) is the least conservative
of the two because it leaves out one $\alpha_{i}$ (the smallest)
from the sum. For 3 terminals and 2 receivers, the original yields
the symmetric pyramidal region with vertexes (0,0,0), (2,0,0), (0,2,0)
and (0,0,2) shown in darker colour in fig.\ \ref{fig:macrocap_sym_new_vs_orig}.
By contrast, $\sum_{\substack{n=1\\
n\neq i}
}^{3}\alpha_{n}<2$ --- to which condition (\ref{eq:powcon-stab-adj-exa-macrod-feas-best})
reduces, in this example --- yields a capacity region that completely
contains the darker triangular pyramid, and extends to include the
grayish triangular volume limited above by the line segment between
(0,0,2) and (1,1,1) (indeed, the point (0,99 , 0,99 , 0,99) does satisfy
$\sum_{\substack{n=1\\
n\neq i}
}^{3}\alpha_{n}<2$ but definitely \emph{not} $\alpha_{1}+\alpha_{2}+\alpha_{3}<2$ ). 

\begin{figure}
\begin{centering}
\includegraphics[scale=0.4]{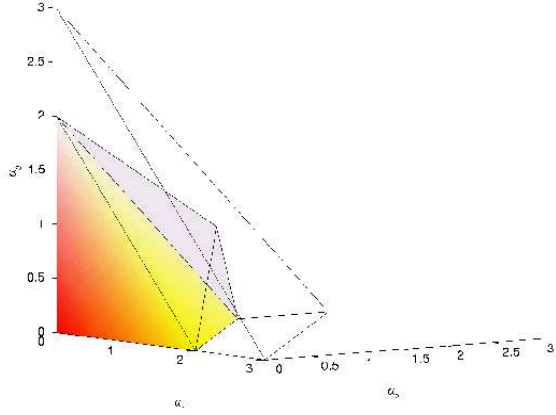}
\par\end{centering}

\caption{\label{fig:macrocap_sym_new_vs_orig} 3 terminals {}``equidistant''
from each of 2 receivers: the original limits capacity to the darker
pyramid, while {}``true'' capacity also includes the grayish triangular
volume. If the terminals cannot be {}``heard'' by a 3rd receiver,
the original greatly overestimates the capacity region by expanding
it to the outer pyramid.}

\end{figure}

It is also significant that the channel gains completely drop out
of the condition given by \cite{powcon_macrod_cap_hanly96}. This
fact reduces somewhat the complexity of the condition. Yet some reflection
suggests that an admission decision should be influenced by the location
of the incumbent and entering terminals. For example, if most active
terminals are near a few receivers, then it should make a difference
to the system whether a new terminal wants to join the crowded region,
or a distant less congested area. Because the original condition is
independent of the channel gains, and hence of the terminals' locations,
it cannot adapt to special geographical distributions of the terminals.
Thus, the original may yield over-optimistic results under certain
channel states, such as when most terminals are in effective range
of only a few receivers. For instance, suppose in the previous example
that a third receiver exists, but that the terminals are located in
such a way that, for each $i$, $h_{i,1}\approx h_{i,2}$ while $h_{i,3}\approx0$.
Thus, $g_{i,1}\approx g_{i,2}\approx1/2$ and $g_{i,3}\approx0$.
Then, condition (\ref{eq:powcon-stab-adj-exa-macrod-feas-best}) still
reduces to $\sum_{\substack{n=1\\
n\neq i}
}^{N}\alpha_{n}<2$ for each $i$, and leads to the already discussed capacity region.
However, the original condition yields $\sum_{n=1}^{N}\alpha_{n}<3$,
which, as illustrated by fig.\ \ref{fig:macrocap_sym_new_vs_orig},
greatly overestimate the capacity region, by extending it to the outer
triangular pyramid with vertexes (0,0,0), (3,0,0), (0,3,0) and (0,0,3)).

Let us now consider the simple \emph{asymmetric} case of 3 terminals
and 2 receivers, with relative gains to the first receiver of 2/3,
1/3, and 1/2, respectively. Condition (\ref{eq:powcon-stab-adj-exa-macrod-feas-best})
leads to 3 inequalities per receiver, such as $\frac{2}{3}\alpha_{1}+\frac{1}{3}\alpha_{2}<1$,
$\frac{1}{3}\alpha_{1}+\frac{2}{3}\alpha_{2}<1$, $\frac{2}{3}\alpha_{2}+\frac{1}{2}\alpha_{3}<1$,
etc. The combination of these inequalities yields a region illustrated
by fig.\ \ref{fig:macrocap_exct_asym_3x2_per}, which is limited
from above by the line segment between (0,0,2) and (1,1,2/3). As already
discussed, the result from \cite{powcon_macrod_cap_hanly96}, $\sum_{n=1}^{N}\alpha_{n}<2$,
yields a symmetric pyramidal region with vertexes at (0,0,0), (2,0,0),
(0,2,0) and (0,0,2) which, as illustrated by fig.\ \ref{fig:macrocap_exct_han_asym_3x2},
intersects with --- but neither contains nor is contained by --- the
region described by fig.\ \ref{fig:macrocap_exct_asym_3x2_per}.

\begin{figure}
\begin{centering}
\subfigure["True" capacity region]{\label{fig:macrocap_exct_asym_3x2_per}\includegraphics[scale=0.4]{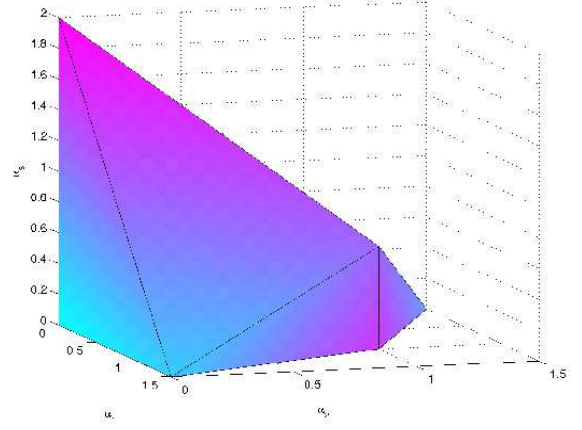}}
\par\end{centering}

\noindent \begin{centering}
\subfigure[The original formula produces a capacity region that neither includes nor is included by the "true" region]{\label{fig:macrocap_exct_han_asym_3x2}\includegraphics[scale=0.4]{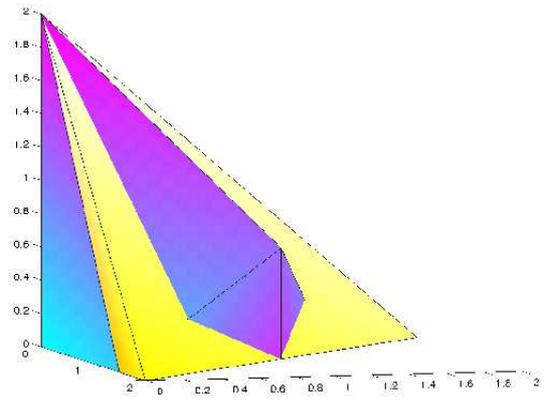}}
\par\end{centering}

\caption{\label{fig:macrocap_exct_asym} The macro-diversity capacity region
for 3 terminals and 2 receivers, for specific asymmetric channel gains}

\end{figure}

As discussed further in \cite{powcon_macrod_cap_rodriguez08}, condition
(\ref{eq:powcon-stab-adj-exa-macrod-feas-best}) yields a low-complexity
algorithm for admission-control decisions, which adapts itself in
a sensible manner to special channel states. Channel gains also play
a prominent role in the feasibility analysis of other multi-cell CDMA
systems, such as in \cite{powcon_cdma_capreg_catrein04}.


\section{A class of sub-additive adjustment rules}

\label{sec:powcon-stab-adj-props}

We focus below on the properties of the individual adjustment function.
Thus, from the standpoint of \cite{powcon_stab_yates95}, our focus
is $I_{i}(\mathbf{p})$, a component of $\mathbf{I}(\mathbf{p})$.

\subsection{Definition and basic properties}

\label{sec:powcon-stab-adj-props-bas}

Below, $\Re_{+}^{M}$ denotes the non-negative orthan of $M$-dimensional
Euclidean space. $\vec{1}_{M}$ denotes the element of $\Re^{M}$
with each component equal to one (the sub-index may be omitted when
appropriate). $\mathbb{N}=\{1,2,\ldots\}$ (the set of Natural numbers). 

We study adjustment rules of the general form $f_{i}(\mathbf{p}_{-i})+c_{i}$
where $c_{i}\in\Re_{+}$and $f_{i}$ is \emph{quasi-semi-normal}.

\begin{definitn}
\label{def:powcon-stab-adj-qseminormal} A function $f:\Re^{M}\rightarrow\Re$
is \emph{quasi-semi-normal} if it satisfies\begin{eqnarray}
f(\mathbf{x})\geq0 &  & \forall\mathbf{x}\in\Re^{M}\label{eq:powcon-stab-adj-qsemin-pos}\\
f(\lambda\vec{1})\leq\lambda f(\vec{1}) &  & \forall\lambda\in(0,1)\label{eq:powcon-stab-adj-qsemin-subhom}\\
f(\mathbf{x}+\mathbf{y})\leq f(\mathbf{x})+f(\mathbf{y}) &  & \forall\mathbf{x},\mathbf{y}\in\Re^{M}\label{eq:powcon-stab-adj-qsemin-triang}\\
f(\mathbf{x})\leq f(\left\Vert \mathbf{x}\right\Vert _{\infty}\vec{1}_{M}) &  & \forall\mathbf{x}\in\Re^{M}\label{eq:powcon-stab-adj-qsemin-mon}\end{eqnarray}
 
\end{definitn}
The preceding conditions are often associated with the words or phrases:
non-negativity (\ref{eq:powcon-stab-adj-qsemin-pos}), sub-homogeneity
(\ref{eq:powcon-stab-adj-qsemin-subhom}), sub-additivity or {}``the
triangle inequality'' (\ref{eq:powcon-stab-adj-qsemin-triang}),
and monotonicity (\ref{eq:powcon-stab-adj-qsemin-mon}).

\begin{remrk}
\label{rem:powcon-stab-adj-qsemi-convex} Below, we only need our
functions to satisfy $f(\lambda\mathbf{x})\leq\lambda f(\mathbf{x})$
at $\mathbf{x}=\vec{1}$. If a function satisfies over its entire
domain both (\ref{eq:powcon-stab-adj-qsemin-triang}) and $f(\lambda\mathbf{x})\leq\lambda f(\mathbf{x})$,
then it is convex (see also Remark \ref{rem:math-norm-convex}). 
\end{remrk}

\begin{remrk}
\label{rem:powcon-stab-adj-qsemi-nonpos-vects} Although power vectors
are inherently non-negative, the difference between 2 non-negative
vectors can, evidently, have negative components. Thus, certain properties
in Definition \ref{def:powcon-stab-adj-qseminormal} must consider
vectors that have negative components.
\end{remrk}

\begin{remrk}
\label{rem:powcon-stab-adj-trian-contin} By Lemma \ref{lem:math-triang-revrs},
a function that satisfies condition (\ref{eq:powcon-stab-adj-qsemin-triang})
also satisfies $\vert f(\mathbf{x})-f(\mathbf{y})\vert\leq f(\mathbf{x}-\mathbf{y})$
, the {}``reverse'' triangle inequality. 
\end{remrk}

\begin{remrk}
\label{rem:powcon-stab-adj-qsemi-triang-subhom-nat} With $\mathbf{x}=\mathbf{y}$
in condition (\ref{eq:powcon-stab-adj-qsemin-triang}) one concludes
that $f(2\mathbf{x})\leq2f(\mathbf{x})$, which easily extends to
$f(m\mathbf{x})\leq mf(\mathbf{x})$ for any $m\in\mathbb{N}$.
\end{remrk}

\begin{remrk}
\label{rem:powcon-stab-adj-qsemi-mono} In (\ref{eq:powcon-stab-adj-qsemin-mon}),
the vector $\left\Vert \mathbf{x}\right\Vert _{\infty}\vec{1}_{M}$
is obtained from $\mathbf{x}$ by replacing each of its components
with the largest of the absolute values of these components, $\left\Vert \mathbf{x}\right\Vert _{\infty}$.
Thus, $f(\mathbf{x})\leq f(\left\Vert \mathbf{x}\right\Vert _{\infty}\vec{1}_{M})$
is a very mild form of monotonicity: {}``max-monotonicity''.
\end{remrk}

\begin{remrk}
\label{rem:powcon-stab-adj-qsemin-norms} All semi-norms and norms
satisfy conditions (\ref{eq:powcon-stab-adj-qsemin-pos}) , (\ref{eq:powcon-stab-adj-qsemin-subhom})
with equality, and (\ref{eq:powcon-stab-adj-qsemin-triang}) (see
Definitions \ref{def:math-norm-semi} and \ref{def:math-norm}). All
vector (semi-)norms that depend on the \emph{absolute value} of the
components of the vector --- such as the sub-family of H\"{o}lder
norms (Definition \ref{def:math-norm-p}) --- also satisfy condition
(\ref{eq:powcon-stab-adj-qsemin-mon}) (see Theorem \ref{thm:math-norm-mon-abs}). 
\end{remrk}

\subsection{Some immediate results}

\begin{lemma}
\label{lem:powcon-stab-adj-qsemi-triang-subhom} Suppose that $f:\Re^{M}\rightarrow\Re$
is such that $\lambda\in(0,1)\implies f(\lambda\mathbf{x})\leq\lambda f(\mathbf{x})$,
and $f(\mathbf{x}+\mathbf{y})\leq f(\mathbf{x})+f(\mathbf{y})$ then
$f$ satisfies $f(r\mathbf{x})\leq rf(\mathbf{x})\quad\forall\mathbf{x}\in\Re^{M}\mbox{ and }r\in\Re_{+}$
\end{lemma}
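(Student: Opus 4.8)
The plan is to reduce an arbitrary non-negative scalar $r$ to the product of a positive integer $n$ and a factor $r/n$ lying in $(0,1)$: the integer factor is absorbed by iterating the triangle inequality, and the remaining factor in $(0,1)$ is handled by the assumed sub-homogeneity. The only point needing care is to pick $n$ strictly larger than $r$, so that $r/n$ falls in the \emph{open} interval on which sub-homogeneity is postulated.

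First I would record the elementary consequence of sub-additivity already noted in Remark~\ref{rem:powcon-stab-adj-qsemi-triang-subhom-nat}: taking $\mathbf{y}=\mathbf{x}$ in $f(\mathbf{x}+\mathbf{y})\le f(\mathbf{x})+f(\mathbf{y})$ gives $f(2\mathbf{x})\le 2f(\mathbf{x})$, and an induction on $m$ --- writing $(m+1)\mathbf{x}=m\mathbf{x}+\mathbf{x}$ and applying sub-additivity once more --- yields
\[
f(m\mathbf{x})\le m\,f(\mathbf{x})\qquad\text{for all }m\in\mathbb{N},\ \mathbf{x}\in\Re^{M}.
\]
Then, fixing $\mathbf{x}\in\Re^{M}$ and $r\in\Re_{+}$ with $r>0$, I would set $n:=\lfloor r\rfloor+1\in\mathbb{N}$, so that $r<n$ and hence $\lambda:=r/n\in(0,1)$, and write
\[
f(r\mathbf{x})=f\!\left(\lambda\,(n\mathbf{x})\right)\le\lambda\,f(n\mathbf{x})\le\lambda\bigl(n\,f(\mathbf{x})\bigr)=r\,f(\mathbf{x}),
\]
where the first inequality is sub-homogeneity applied to the vector $n\mathbf{x}$ and the second is the integer bound above. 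This one chain handles every $r>0$: when $r\in(0,1)$ it degenerates to $n=1$ and reduces to the hypothesis itself. The residual case $r=0$ asks only for $f(\mathbf{0})\le0$, which follows from sub-homogeneity alone, since $f(\mathbf{0})=f(\lambda\mathbf{0})\le\lambda f(\mathbf{0})$ for every $\lambda\in(0,1)$ forces $(1-\lambda)f(\mathbf{0})\le0$.

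I do not anticipate a genuine obstacle; this is a routine scaling manipulation. The subtlety worth flagging is that sub-homogeneity is available only for $\lambda\in(0,1)$ and not at $\lambda=1$, so $n$ must be taken strictly above $r$ --- the choice $n=\lfloor r\rfloor+1$ does this uniformly and also keeps the degenerate integer step harmless --- and that non-negativity of $f$ is \emph{not} among the hypotheses of this lemma, which is why the case $r=0$ has to be settled from sub-homogeneity rather than from $f\ge0$.
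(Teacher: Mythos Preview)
Your argument is correct. Both proofs rest on the same two ingredients --- the iterated triangle inequality giving $f(m\mathbf{x})\le m f(\mathbf{x})$ for $m\in\mathbb{N}$, and the assumed sub-homogeneity on $(0,1)$ --- but combine them through a different decomposition. The paper writes $r\mathbf{x}$ \emph{additively} as $m\mathbf{x}+(r-m)\mathbf{x}$ with $m=\lfloor r\rfloor$, applying sub-additivity once and then the two ingredients to the summands; you instead write $r\mathbf{x}$ \emph{multiplicatively} as $(r/n)\cdot(n\mathbf{x})$ with $n=\lfloor r\rfloor+1$, applying sub-homogeneity to the outer factor and the integer bound to the inner one. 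Your route has the mild advantage of treating every $r>0$ in one stroke (the paper's additive split, as written, handles only non-integer $r>1$ explicitly and tacitly relies on the hypothesis and the Remark for the remaining cases), and you are more careful about the boundary case $r=0$, correctly noting that non-negativity is not among the lemma's hypotheses and deriving $f(\mathbf{0})\le 0$ from sub-homogeneity alone.
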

\IEEEproof 

Consider $m<r<m+1$ for $m\in\mathbb{N}$ (thus, $m$ is the {}``floor''
of $r$, $\lfloor r\rfloor$). Then $f(r\mathbf{x})\equiv f(m\mathbf{x}+(r-m)\mathbf{x})\leq f(m\mathbf{x})+f((r-m)\mathbf{x})$.
By Remark \ref{rem:powcon-stab-adj-qsemi-triang-subhom-nat} $f(m\mathbf{x})\leq mf(\mathbf{x})$.
By definition, $r-m\in(0,1)$, $\therefore f((r-m)\mathbf{x})\leq(r-m)f(\mathbf{x})$
by hypothesis. Hence, $f(r\mathbf{x})\leq mf(\mathbf{x})+(r-m)f(\mathbf{x})\equiv rf(\mathbf{x})$

\IEEEQED

\begin{remrk}
\label{rem:powcon-stab-adj-qsemi-triang-subhom} Lemma \ref{lem:powcon-stab-adj-qsemi-triang-subhom}
is valid for any $\mathbf{x}$, but we only need to apply it at the
point $\mathbf{x}=\vec{1}$ (i.e., $f(r\mathbf{x})\leq rf(\mathbf{x})\quad\forall r\in\Re_{+}\mbox{ at }\mathbf{x}=\vec{1}_{M}$). 
\end{remrk}

\begin{lemma}
\label{lem:powcon-stab-adj-qsemi-dotprod} Let $\mathbf{a}\in\Re^{M}$
with $a_{i}\neq0$. Then the function $f(\mathbf{x}):=\sum_{m=1}^{M}\vert a_{m}x_{m}\vert$
for $\mathbf{x}\in\Re^{M}$ satisfies Definition \ref{def:powcon-stab-adj-qseminormal}.
\end{lemma}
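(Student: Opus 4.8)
The plan is to verify, one at a time, the four defining conditions of Definition \ref{def:powcon-stab-adj-qseminormal}; each follows from an elementary property of the absolute value, so no single step is a genuine obstacle, but it is worth recording which termwise estimate drives each condition. Throughout I would write $f(\mathbf{x})=\sum_{m=1}^{M}|a_{m}|\,|x_{m}|$, using $|a_{m}x_{m}|=|a_{m}|\,|x_{m}|$. Non-negativity (\ref{eq:powcon-stab-adj-qsemin-pos}) is then immediate: every summand $|a_{m}|\,|x_{m}|$ is non-negative, hence so is the sum.

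For the triangle inequality (\ref{eq:powcon-stab-adj-qsemin-triang}) I would apply the scalar triangle inequality componentwise, $|x_{m}+y_{m}|\le|x_{m}|+|y_{m}|$, multiply through by $|a_{m}|\ge0$, and sum over $m$, obtaining $f(\mathbf{x}+\mathbf{y})=\sum_{m}|a_{m}|\,|x_{m}+y_{m}|\le\sum_{m}|a_{m}|\bigl(|x_{m}|+|y_{m}|\bigr)=f(\mathbf{x})+f(\mathbf{y})$. For sub-homogeneity (\ref{eq:powcon-stab-adj-qsemin-subhom}) I would note that $f$ is in fact absolutely homogeneous: $f(\lambda\mathbf{x})=\sum_{m}|a_{m}|\,|\lambda x_{m}|=|\lambda|\sum_{m}|a_{m}|\,|x_{m}|=|\lambda|\,f(\mathbf{x})$, so in particular for $\lambda\in(0,1)$ one has $f(\lambda\vec{1})=\lambda f(\vec{1})\le\lambda f(\vec{1})$, i.e.\ (\ref{eq:powcon-stab-adj-qsemin-subhom}) holds with equality.

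Finally, for max-monotonicity (\ref{eq:powcon-stab-adj-qsemin-mon}) I would use $|x_{m}|\le\left\Vert \mathbf{x}\right\Vert _{\infty}$ for every $m$ together with $\left\Vert \mathbf{x}\right\Vert _{\infty}\ge0$: multiplying by $|a_{m}|$ and summing gives $f(\mathbf{x})=\sum_{m}|a_{m}|\,|x_{m}|\le\left\Vert \mathbf{x}\right\Vert _{\infty}\sum_{m}|a_{m}|$, and the right-hand side is exactly $f\bigl(\left\Vert \mathbf{x}\right\Vert _{\infty}\vec{1}_{M}\bigr)$ since the $m$-th component of $\left\Vert \mathbf{x}\right\Vert _{\infty}\vec{1}_{M}$ equals $\left\Vert \mathbf{x}\right\Vert _{\infty}$.

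As a shortcut I would also point out that $f$ is precisely the weighted $\ell_{1}$ norm $\mathbf{x}\mapsto\left\Vert (a_{1}x_{1},\dots,a_{M}x_{M})\right\Vert _{1}$; when every $a_{m}\neq0$ this is a bona fide vector norm depending only on the absolute values of the components, so Remark \ref{rem:powcon-stab-adj-qsemin-norms} (with Theorem \ref{thm:math-norm-mon-abs}) already yields all four properties, the direct computation above being included merely because it is shorter than tracing those references. There is no real obstacle in this proof; the only mild subtlety is the one flagged in Remark \ref{rem:powcon-stab-adj-qsemi-nonpos-vects}, namely that (\ref{eq:powcon-stab-adj-qsemin-triang}) and (\ref{eq:powcon-stab-adj-qsemin-mon}) must be checked for vectors that may have negative entries, which the absolute values in $f$ accommodate automatically.
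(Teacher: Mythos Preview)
Your proof is correct and mirrors the paper's own: the paper says the properties ``can be checked directly'' and then, as an alternative, writes $f(\mathbf{x})=\Vert D\mathbf{x}\Vert_{1}$ with $D=\mathrm{diag}(a_{1},\dots,a_{M})$ non-singular and invokes Theorem~\ref{thm:math-norm-matrix-X-vect} to conclude $f$ is a norm. Your direct verification supplies exactly the first route, and your weighted-$\ell_{1}$ shortcut is the same idea as the second, merely citing Remark~\ref{rem:powcon-stab-adj-qsemin-norms} and Theorem~\ref{thm:math-norm-mon-abs} in place of Theorem~\ref{thm:math-norm-matrix-X-vect}.
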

\begin{proof}
The relevant properties can be checked directly. Alternatively, one
may also write $f$ as $f(\mathbf{x})=\Vert D\mathbf{x}\Vert_{1}$
where $D$ is the diagonal matrix $D:=\mbox{diag}(a_{1},\cdots,a_{M})$,
and $\Vert\cdot\Vert_{1}$ denotes the H\"{o}lder 1-norm (Definition
\ref{def:math-norm-p}). Since $D$ is evidently non-singular, Theorem
\ref{thm:math-norm-matrix-X-vect} applies, and $f$ is a norm.
\end{proof}

\begin{lemma}
\label{lem:powcon-stab-adj-qsemi-macrod-apprx} For $\mathbf{x}\in\Re^{M}$
and $k=1,\cdots,K$, consider the vectors $\mathbf{a}_{k}=(a_{1,k},\cdots,a_{M,k})$
with $a_{m,k}\neq0$, and let $y_{k}(\mathbf{x})=\sum_{m=1}^{M}\vert a_{m,k}x_{m}\vert$,
$\mathbf{y}(\mathbf{x}):=(y_{1}(\mathbf{x}),\cdots,y_{K}(\mathbf{x}))$,
and $f(\mathbf{x}):=\Vert\mathbf{y}(\mathbf{x})\Vert_{\mu}$, where
$\Vert\cdot\Vert_{\mu}$ denote a monotonic norm on $\Re^{K}$ (see
Definition \ref{def:math-norm-mon}). Then $f$ satisfies Definition
\ref{def:powcon-stab-adj-qseminormal} .
\end{lemma}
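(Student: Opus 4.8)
The plan is to verify the four defining conditions of Definition~\ref{def:powcon-stab-adj-qseminormal} directly, exploiting the layered structure $f=\Vert\cdot\Vert_{\mu}\circ\mathbf{y}$, where each inner coordinate map $y_{k}(\mathbf{x})=\sum_{m}\vert a_{m,k}x_{m}\vert$ is, by Lemma~\ref{lem:powcon-stab-adj-qsemi-dotprod} (all $a_{m,k}\neq0$), itself a norm on $\Re^{M}$. The two facts I will use repeatedly are: (a) each $y_{k}$ is non-negative, absolutely homogeneous and sub-additive, so $\mathbf{y}(\mathbf{x})$ always lies in $\Re_{+}^{K}$; and (b) $\Vert\cdot\Vert_{\mu}$ is a \emph{monotonic} norm, i.e.\ $0\le\mathbf{u}\le\mathbf{v}\implies\Vert\mathbf{u}\Vert_{\mu}\le\Vert\mathbf{v}\Vert_{\mu}$ (Definition~\ref{def:math-norm-mon}), besides obeying the usual norm axioms.

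Non-negativity~(\ref{eq:powcon-stab-adj-qsemin-pos}) is then immediate, since $f$ is a norm evaluated at a point of $\Re^{K}$. For sub-homogeneity~(\ref{eq:powcon-stab-adj-qsemin-subhom}) I would use absolute homogeneity of each $y_{k}$ to get $\mathbf{y}(\lambda\vec{1})=\lambda\,\mathbf{y}(\vec{1})$ for $\lambda\in(0,1)$, whence $f(\lambda\vec{1})=\Vert\lambda\,\mathbf{y}(\vec{1})\Vert_{\mu}=\lambda\Vert\mathbf{y}(\vec{1})\Vert_{\mu}=\lambda f(\vec{1})$, so the required inequality in fact holds with equality.

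The triangle inequality~(\ref{eq:powcon-stab-adj-qsemin-triang}) is where the monotonicity of $\Vert\cdot\Vert_{\mu}$ will do the work: sub-additivity of each $y_{k}$ gives the coordinatewise relation $0\le\mathbf{y}(\mathbf{x}+\mathbf{z})\le\mathbf{y}(\mathbf{x})+\mathbf{y}(\mathbf{z})$ in $\Re_{+}^{K}$, and then applying monotonicity of $\Vert\cdot\Vert_{\mu}$ followed by its own triangle inequality yields $f(\mathbf{x}+\mathbf{z})=\Vert\mathbf{y}(\mathbf{x}+\mathbf{z})\Vert_{\mu}\le\Vert\mathbf{y}(\mathbf{x})+\mathbf{y}(\mathbf{z})\Vert_{\mu}\le\Vert\mathbf{y}(\mathbf{x})\Vert_{\mu}+\Vert\mathbf{y}(\mathbf{z})\Vert_{\mu}=f(\mathbf{x})+f(\mathbf{z})$. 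For max-monotonicity~(\ref{eq:powcon-stab-adj-qsemin-mon}) I would bound, for each $k$, $y_{k}(\mathbf{x})=\sum_{m}\vert a_{m,k}\vert\,\vert x_{m}\vert\le\Vert\mathbf{x}\Vert_{\infty}\sum_{m}\vert a_{m,k}\vert=y_{k}(\Vert\mathbf{x}\Vert_{\infty}\vec{1}_{M})$, so that $0\le\mathbf{y}(\mathbf{x})\le\mathbf{y}(\Vert\mathbf{x}\Vert_{\infty}\vec{1}_{M})$, and one further appeal to monotonicity of $\Vert\cdot\Vert_{\mu}$ gives $f(\mathbf{x})\le f(\Vert\mathbf{x}\Vert_{\infty}\vec{1}_{M})$.

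The only real subtlety --- the ``hard part'', such as it is --- will be making sure the monotonicity hypothesis on $\Vert\cdot\Vert_{\mu}$ is invoked legitimately: it concerns \emph{ordered pairs of non-negative vectors}, which is exactly the situation here because every $y_{k}$ is non-negative irrespective of the signs of the components of $\mathbf{x}$ (cf.\ Remark~\ref{rem:powcon-stab-adj-qsemi-nonpos-vects}). A slightly slicker alternative would be to note that $f$ is itself a genuine norm on $\Re^{M}$ --- definiteness holds since $y_{k}(\mathbf{x})=0$ forces every $x_{m}=0$ because $a_{m,k}\neq0$, while homogeneity and sub-additivity were checked above --- and that it depends only on the absolute values of the components, so that~(\ref{eq:powcon-stab-adj-qsemin-mon}) follows for free from Theorem~\ref{thm:math-norm-mon-abs} (as in Remark~\ref{rem:powcon-stab-adj-qsemin-norms}); but the direct verification above is entirely self-contained and no longer.
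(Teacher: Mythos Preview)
Your proposal is correct. Both approaches start from the same place --- each $y_{k}$ is a norm by Lemma~\ref{lem:powcon-stab-adj-qsemi-dotprod} --- but then diverge: the paper simply invokes the ``norm of norms'' result (Theorem~\ref{thm:math-norm-compos}) to conclude at once that $f$ is a norm on $\Re^{M}$, and then relies on Remark~\ref{rem:powcon-stab-adj-qsemin-norms} (absolute-value dependence $\Rightarrow$ all four conditions) to finish. You instead verify the four conditions of Definition~\ref{def:powcon-stab-adj-qseminormal} by hand, using monotonicity of $\Vert\cdot\Vert_{\mu}$ at the two places where it is genuinely needed (the triangle inequality and max-monotonicity). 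Your ``slicker alternative'' is essentially the paper's route, except that you re-derive the norm-of-norms conclusion in this special case rather than citing Theorem~\ref{thm:math-norm-compos}. The trade-off is the expected one: the paper's proof is two lines because it outsources the work to a textbook theorem, while yours is self-contained and makes transparent exactly where the monotonicity hypothesis on $\Vert\cdot\Vert_{\mu}$ is used.
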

\begin{proof}
By Lemma \ref{lem:powcon-stab-adj-qsemi-dotprod}, each $y_{k}$ can
be written as $y_{k}(\mathbf{x})=\Vert\mathbf{x}\Vert_{\nu_{k}}$where
$\Vert\cdot\Vert_{\nu_{k}}$ denotes a monotonic norm. Thus, $f$
can be written as $f(\mathbf{x})=\left\Vert \left[\begin{array}{ccc}
\Vert\mathbf{x}\Vert_{\nu_{1}} & \cdots & \Vert\mathbf{x}\Vert_{\nu_{K}}\end{array}\right]^{\prime}\right\Vert _{\mu}$.

By Theorem \ref{thm:math-norm-compos} ({}``norm of norms''), $f(\mathbf{x})$
is a norm.
\end{proof}

\subsection{Some examples}

\label{sec:powcon-stab-adj-props-exa}

\subsubsection{The simplest case}

\label{sub:powcon-stab-adj-props-exa-simp}

\begin{example}
\label{exa:powcon-stab-adj-qsemin-1} Consider a single-cell system,
and let $h_{j}p_{j}$ denote the \emph{received} power from terminal
$j$. Suppose that each terminal adjusts its power so that $h_{i}p_{i}/(Y_{i}(\mathbf{p}_{-i})+\sigma)=\alpha_{i}$,
where $Y_{i}=\sum_{\substack{n=1\\
n\neq i}
}^{N}h_{n}p_{n}$ is the interference affecting terminal $i$, and $\sigma$ represents
the average noise power. $p_{i}$ can be written as $f_{i}(\mathbf{p}_{-i})+c_{i}$,
with $f_{i}(\mathbf{p}_{-i}):=\sum_{\substack{n=1\\
n\neq i}
}^{N}(\alpha_{i}h_{n}/h_{i})\left|p_{n}\right|$ and $c_{i}=\sigma\alpha_{i}/h_{i}$. By Lemma \ref{lem:powcon-stab-adj-qsemi-dotprod}
, $f_{i}$ is a norm --- the absolute value operator has no real effect
here --- (Definition \ref{def:math-norm}), and hence has the desired
properties (see Remark \ref{rem:powcon-stab-adj-qsemin-norms})
\end{example}

\subsubsection{The macro-diversity scenario}

\label{sub:powcon-stab-adj-props-exa-macrod}

\paragraph{System model}

Under macro-diversity, the cellular structure is removed and each
transmitter is jointly decoded by all receivers\cite{powcon_phd_hanly93,powcon_macrod_cap_hanly96}.
A relevant QoS index for terminal $i$ is the product of its spreading
gain by its {}``carrier to interference ratio'' (CIR), $\alpha_{i}$,
defined as \cite{powcon_macrod_cap_hanly96} :\begin{equation}
\alpha_{i}=\frac{P_{i}h_{i,1}}{Y_{i,1}+\sigma_{1}^{2}}+\cdots+\frac{P_{i}h_{i,K}}{Y_{i,K}+\sigma_{K}^{2}}\label{eq:macrod-org-ai}\end{equation}

where $K$ is the number of receivers in the network, $h_{i,k}$ is
the channel gain in the signal from terminal $i$ arriving at receiver
$k$, and $Y_{i,k}$ denotes the interfering power experienced by
transmitter $i$ at receiver $k$; i.e., 

\begin{equation}
Y_{i,k}:=\sum_{\substack{n=1\\
n\neq i}
}^{N}P_{n}h_{n,k}\label{eq:macrod-org-Yik}\end{equation}

Below, we recognise and utilise the vectors: \begin{equation}
\mathbf{Y}_{i}:=(Y_{i,1},\cdots,Y_{i,K})\label{eq:macrod-Yi}\end{equation}
\begin{equation}
\boldsymbol{\sigma}:=(\sigma_{1}^{2},\cdots,\sigma_{K}^{2})\label{eq:macrod-noise-vect}\end{equation}

\paragraph{Normalised adjustment}

From (\ref{eq:macrod-org-ai}) one obtains the adjustment process
\begin{equation}
P_{i}=\alpha_{i}\left(\frac{h_{i,1}}{Y_{i,1}+\sigma_{1}^{2}}+\cdots+\frac{h_{i,K}}{Y_{i,K}+\sigma_{K}^{2}}\right)^{-1}\label{eq:macrod-org-adj}\end{equation}
It is unclear that the function on the right side of (\ref{eq:macrod-org-adj})
can be written as $f_{i}(\mathbf{p}_{-i})+c_{i}$ with $c_{i}\in\Re_{+}$and
$f_{i}$ satisfying Definition \ref{def:powcon-stab-adj-qseminormal}.
However, an adjustment rule that has the desired form, and  over estimates
the $P_{i}$ given by (\ref{eq:macrod-org-adj}) can be readily obtained. 

Reference \cite{powcon_macrod_cap_hanly96} simplifies the macro-diversity
analysis by including a terminal's own signal as part of the interference
(thus, the sum in equation (\ref{eq:macrod-org-Yik}) is taken over
\emph{all} $n$). As an alternative, in equation (\ref{eq:macrod-org-adj}),
one can replace each $Y_{i,k}(\mathbf{P})$ with \begin{equation}
\hat{Y}_{i}:=\max_{k}\{Y_{i,k}\}\equiv\Vert\mathbf{Y}_{i}\Vert_{\infty}\label{eq:macrod-apprx-interf-max}\end{equation}
and each $\sigma_{k}^{2}$ with \begin{equation}
\hat{\sigma}:=\max_{k}\{\sigma_{k}^{2}\}\equiv\left\Vert \boldsymbol{\sigma}\right\Vert _{\infty}\label{eq:macrod-apprx-noise-max}\end{equation}
Then, with \begin{equation}
h_{i}:=h_{i,1}+\cdots+h_{i,K}\label{eq:macrod-orig-hi}\end{equation}
equation (\ref{eq:macrod-org-adj}) becomes

\begin{equation}
P_{i}=\frac{\alpha_{i}}{h_{i}}\left(\hat{Y}_{i}+\hat{\sigma}\right)\label{eq:macrod-apprx-Pi}\end{equation}
Thus, the adjustment process can now be written as $P_{i}=f_{i}(\mathbf{P}_{-i})+c_{i}$
where,

\begin{equation}
f_{i}(\mathbf{P}_{-i}):=\frac{\alpha_{i}}{h_{i}}\left\Vert \mathbf{Y}_{i}(\mathbf{P}_{-i})\right\Vert _{\infty}\label{eq:macrod-apprx-adj-norm}\end{equation}
and\begin{equation}
c_{i}:=\frac{\alpha_{i}}{h_{i}}\hat{\sigma}\label{eq:macrod-apprx-ci}\end{equation}

\paragraph{Properties of the new macro-diversity adjustment}

\begin{prop}
\label{pro:powcon-stab-adj-exa-macrod} The function $f_{i}$ given
by equation (\ref{eq:macrod-apprx-adj-norm}) satisfies Definition
\ref{def:powcon-stab-adj-qseminormal}.
\end{prop}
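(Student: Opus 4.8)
The plan is to recognize $f_i$ defined in (\ref{eq:macrod-apprx-adj-norm}) as a composition of maps each of which is already covered by the lemmas of Section \ref{sec:powcon-stab-adj-props}, so that Definition \ref{def:powcon-stab-adj-qseminormal} follows without checking the four axioms by hand. The key observation is that each component $Y_{i,k}(\mathbf{P}_{-i})=\sum_{n\neq i}P_n h_{n,k}$ of the vector $\mathbf{Y}_i$ is, up to the harmless absolute-value operator, exactly a function of the form $\sum_m |a_{m,k}x_m|$ with $a_{n,k}=h_{n,k}>0$ (the index $i$ being simply omitted from the sum), which is the object treated in Lemma \ref{lem:powcon-stab-adj-qsemi-dotprod}. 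Then $f_i(\mathbf{P}_{-i})=\frac{\alpha_i}{h_i}\Vert\mathbf{Y}_i(\mathbf{P}_{-i})\Vert_\infty$ is a positive scalar multiple of a norm applied to a vector whose entries are themselves such norms — precisely the situation of Lemma \ref{lem:powcon-stab-adj-qsemi-macrod-apprx}, with the outer monotonic norm being $\Vert\cdot\Vert_\infty$.

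Concretely, first I would invoke Lemma \ref{lem:powcon-stab-adj-qsemi-macrod-apprx} with $M$ the number of interfering terminals, $K$ the number of receivers, $a_{n,k}:=h_{n,k}$ (all strictly positive by assumption), and $\Vert\cdot\Vert_\mu:=\Vert\cdot\Vert_\infty$; one must note that $\Vert\cdot\Vert_\infty$ is a monotonic norm on $\Re^K$ (Definition \ref{def:math-norm-mon}), which is immediate since $|x_k|\le|y_k|$ for all $k$ forces $\max_k|x_k|\le\max_k|y_k|$. This gives that $\mathbf{x}\mapsto\Vert\mathbf{Y}_i(\mathbf{x})\Vert_\infty$ is a norm, hence quasi-semi-normal. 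Second, I would observe that multiplication by the fixed non-negative constant $\alpha_i/h_i$ preserves all four defining properties: non-negativity (\ref{eq:powcon-stab-adj-qsemin-pos}) and the triangle inequality (\ref{eq:powcon-stab-adj-qsemin-triang}) scale linearly, sub-homogeneity (\ref{eq:powcon-stab-adj-qsemin-subhom}) is unaffected, and max-monotonicity (\ref{eq:powcon-stab-adj-qsemin-mon}) is preserved because the scalar factor appears identically on both sides. Alternatively, and even more cleanly, a positive multiple of a norm is again a norm, so one may cite Remark \ref{rem:powcon-stab-adj-qsemin-norms} once more to conclude directly.

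There is essentially no hard part here; the content of the proposition is entirely in setting up the right composition, and the technical work has already been discharged by Lemmas \ref{lem:powcon-stab-adj-qsemi-dotprod} and \ref{lem:powcon-stab-adj-qsemi-macrod-apprx} and by the theorems on norms cited in Appendix material (notably Theorem \ref{thm:math-norm-compos}, the ``norm of norms'' result, and Theorem \ref{thm:math-norm-mon-abs} guaranteeing that absolute-value-based norms are max-monotone). The only point requiring a word of care is the indexing mismatch: $f_i$ is a function of the $(N-1)$-vector $\mathbf{P}_{-i}$, whereas the lemmas are stated for a generic $\Re^M$; this is resolved simply by taking $M=N-1$ and letting the coefficient vectors $\mathbf{a}_k$ range over the channel gains $h_{n,k}$ for $n\neq i$. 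I would therefore present the proof in two short sentences: identify $f_i$ as $\tfrac{\alpha_i}{h_i}$ times the map of Lemma \ref{lem:powcon-stab-adj-qsemi-macrod-apprx} with outer norm $\Vert\cdot\Vert_\infty$, conclude it is a norm, and note a non-negative multiple of a norm remains quasi-semi-normal.
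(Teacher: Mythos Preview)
Your proposal is correct and follows essentially the same route as the paper: identify $f_i$ as an instance of Lemma~\ref{lem:powcon-stab-adj-qsemi-macrod-apprx} with outer norm $\Vert\cdot\Vert_\infty$, conclude it is a norm, and invoke Remark~\ref{rem:powcon-stab-adj-qsemin-norms}. The only cosmetic difference is that the paper absorbs the scalar $\alpha_i/h_i$ directly into the coefficients (setting $a_{n,k}:=\alpha_i h_{n,k}/h_i$) so that Lemma~\ref{lem:powcon-stab-adj-qsemi-macrod-apprx} applies to $f_i$ in one shot, whereas you factor it out and handle it as a separate final step; both are fine.
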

\begin{proof}
In order to apply Lemma \ref{lem:powcon-stab-adj-qsemi-macrod-apprx},
let $\mathbf{x}:=\mathbf{P}_{-i}$ in such a way that $x_{n}=P_{n}$
for $n<i$ and $x_{n}=P_{n+1}$ for $n\geq i$). Likewise, let $a_{n,k}:=\alpha_{i}h_{n,k}/h_{i}$
for $n<i$ and $a_{n,k}:=\alpha_{i}h_{(n+1),k}/h_{i}$ for $n>i$
. (For example, for $N=3$ and $K=2$, if $i=2$, $x_{1}=P_{1}$,
$x_{2}=P_{3}$, $a_{1,k}=\alpha_{2}h_{1,k}/h_{2}$ and $a_{2,k}=\alpha_{2}h_{3,k}/h_{2}$).

The $k$th component of $(\alpha_{i}/h_{i})\mathbf{Y}_{i}(\mathbf{P})$
can then be written as $\sum_{m=1}^{N-1}\left|x_{m}\right|a_{k,m}\equiv\Vert\mathbf{x}\Vert_{\nu_{k}}$(see
Lemma \ref{lem:powcon-stab-adj-qsemi-dotprod}).

Thus, equation (\ref{eq:macrod-apprx-adj-norm}) can be written as
\begin{equation}
\left\Vert \left[\begin{array}{ccc}
\Vert\mathbf{x}\Vert_{\nu_{1}} & \cdots & \Vert\mathbf{x}\Vert_{\nu_{K}}\end{array}\right]^{\prime}\right\Vert _{\infty}\label{eq:macrod-apprx-adj-norm-gralform}\end{equation}
Lemma \ref{lem:powcon-stab-adj-qsemi-macrod-apprx}, with $\Vert\cdot\Vert_{\infty}$
playing the role of $\Vert\cdot\Vert_{\mu}$, implies that $f_{i}$
is a norm, and has, therefore, the desired properties (see Remark
\ref{rem:powcon-stab-adj-qsemin-norms}).
\end{proof}

\section{A fixed-point problem}

We seek to characterise the conditions leading to the convergence
of the process in which each terminal in a wireless communication
system, such as the reverse link of a CDMA cell, adjusts its transmission
power through a function of the form $f_{i}(\mathbf{p}_{-i})+c_{i}$
with $c_{i}\in\Re_{+}$ and $f_{i}$ satisfying Definition \ref{def:powcon-stab-adj-qseminormal}.

\subsection{Approach}

As discussed in subsection \ref{sub:powcon-stab-adj-disc-method},
we utilise fixed-point theory, in particular, Theorem \ref{thm:math-fixpt-banach1922},
the Banach Contraction Mapping principle.

\begin{remrk}
\label{rem:powcon-stab-adj-contr-norm} One can choose any metric
to apply Theorem \ref{thm:math-fixpt-banach1922}. Below we utilise
$d(\mathbf{x},\mathbf{y}):=\left\Vert \mathbf{x}-\mathbf{y}\right\Vert _{\infty}$
(see Definition \ref{def:math-norm-sup}), although the sub-index
of $\left\Vert \cdot\right\Vert _{\infty}$ is omitted for notational
convenience. 
\end{remrk}

\subsection{The Banach approach applied to our framework}

To apply fixed-point analysis, we need functions defined on $\Re^{N}$
.

\begin{lemma}
\label{lem:powcon-stab-adj-Rn-Rn1} For $x\in\Re^{N}$ let $g_{i}(\mathbf{x}):=0\cdot x_{i}+f_{i}(\mathbf{x}_{-i})\equiv f_{i}(\mathbf{x}_{-i})$.
If each $f_{i}$ satisfies Definition \ref{def:powcon-stab-adj-qseminormal}
as a function on $\Re^{N-1}$, then each $g_{i}$ satisfies Definition
\ref{def:powcon-stab-adj-qseminormal} as a function on $\Re^{N}$.
\end{lemma}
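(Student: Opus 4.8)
The plan is to verify each of the four defining conditions of Definition~\ref{def:powcon-stab-adj-qseminormal} for $g_i$ on $\Re^N$, deriving each one directly from the corresponding property of $f_i$ on $\Re^{N-1}$. The key observation throughout is that $g_i(\mathbf{x})$ depends on $\mathbf{x}$ only through $\mathbf{x}_{-i}$ (the $0\cdot x_i$ term is cosmetic), so every inequality for $g_i$ reduces to the same inequality for $f_i$ applied to the vector obtained by deleting the $i$th coordinate.

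First I would dispatch non-negativity~(\ref{eq:powcon-stab-adj-qsemin-pos}): for any $\mathbf{x}\in\Re^N$, $g_i(\mathbf{x})=f_i(\mathbf{x}_{-i})\geq0$ since $\mathbf{x}_{-i}\in\Re^{N-1}$ and $f_i$ satisfies~(\ref{eq:powcon-stab-adj-qsemin-pos}). Next, sub-additivity~(\ref{eq:powcon-stab-adj-qsemin-triang}): for $\mathbf{x},\mathbf{y}\in\Re^N$, note $(\mathbf{x}+\mathbf{y})_{-i}=\mathbf{x}_{-i}+\mathbf{y}_{-i}$, so $g_i(\mathbf{x}+\mathbf{y})=f_i(\mathbf{x}_{-i}+\mathbf{y}_{-i})\leq f_i(\mathbf{x}_{-i})+f_i(\mathbf{y}_{-i})=g_i(\mathbf{x})+g_i(\mathbf{y})$. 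For sub-homogeneity~(\ref{eq:powcon-stab-adj-qsemin-subhom}), observe that $\vec{1}_N{}_{,-i}=\vec{1}_{N-1}$ (deleting one coordinate from the all-ones vector of length $N$ gives the all-ones vector of length $N-1$), hence for $\lambda\in(0,1)$, $g_i(\lambda\vec{1}_N)=f_i(\lambda\vec{1}_{N-1})\leq\lambda f_i(\vec{1}_{N-1})=\lambda g_i(\vec{1}_N)$.

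The one condition that needs a small remark is max-monotonicity~(\ref{eq:powcon-stab-adj-qsemin-mon}). For $\mathbf{x}\in\Re^N$ I would use $g_i(\mathbf{x})=f_i(\mathbf{x}_{-i})\leq f_i(\Vert\mathbf{x}_{-i}\Vert_\infty\vec{1}_{N-1})$ by~(\ref{eq:powcon-stab-adj-qsemin-mon}) for $f_i$. Now $\Vert\mathbf{x}_{-i}\Vert_\infty\leq\Vert\mathbf{x}\Vert_\infty$, and since $f_i$ is non-decreasing along the ray of all-ones vectors—this follows from sub-additivity together with sub-homogeneity, i.e. from Lemma~\ref{lem:powcon-stab-adj-qsemi-triang-subhom} applied at $\vec{1}$, or more directly: for $0\le a\le b$, write $b\vec{1}=a\vec{1}+(b-a)\vec{1}$ and apply~(\ref{eq:powcon-stab-adj-qsemin-triang}) and non-negativity to get $f_i(a\vec{1})\le f_i(b\vec{1})$—we obtain $f_i(\Vert\mathbf{x}_{-i}\Vert_\infty\vec{1}_{N-1})\leq f_i(\Vert\mathbf{x}\Vert_\infty\vec{1}_{N-1})$. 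Finally $\Vert\mathbf{x}\Vert_\infty\vec{1}_N{}_{,-i}=\Vert\mathbf{x}\Vert_\infty\vec{1}_{N-1}$, so the right-hand side equals $g_i(\Vert\mathbf{x}\Vert_\infty\vec{1}_N)$, giving~(\ref{eq:powcon-stab-adj-qsemin-mon}) for $g_i$.

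I do not anticipate a serious obstacle here; this is essentially a bookkeeping lemma whose content is that appending a ``dummy'' coordinate on which the function does not depend preserves all four axioms. The only place requiring a moment's care is the monotonicity step, because the ``max'' of the truncated vector is smaller than the ``max'' of the full vector, so one must separately argue that $f_i$ is monotone along the all-ones ray—but that is immediate from the triangle inequality plus non-negativity, as indicated above. Everything else is a direct substitution using $(\,\cdot\,)_{-i}$ commuting with vector addition and scalar multiplication and mapping $\vec{1}_N$ to $\vec{1}_{N-1}$.
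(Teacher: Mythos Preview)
Your verification of non-negativity, sub-additivity, and sub-homogeneity matches the paper's argument (the paper declares the first ``trivial'' and writes out the other two just as you do). The issue is the max-monotonicity step.

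Your ``more direct'' argument for ray monotonicity is oriented the wrong way: applying (\ref{eq:powcon-stab-adj-qsemin-triang}) to $b\vec 1=a\vec 1+(b-a)\vec 1$ yields $f_i(b\vec 1)\le f_i(a\vec 1)+f_i((b-a)\vec 1)$, an \emph{upper} bound on $f_i(b\vec 1)$, not the lower bound $f_i(a\vec 1)\le f_i(b\vec 1)$ you need. Your alternative suggestion via Lemma~\ref{lem:powcon-stab-adj-qsemi-triang-subhom} only gives $f_i(r\vec 1)\le r\,f_i(\vec 1)$, which again does not imply monotonicity along the ray. In fact, ray monotonicity is \emph{not} a consequence of Definition~\ref{def:powcon-stab-adj-qseminormal}: on $\Re^{1}$ the function $f(y)=\operatorname{dist}(y,2\mathbb Z)$ is non-negative, sub-additive, satisfies $f(\lambda)=\lambda=\lambda f(1)$ for $\lambda\in(0,1)$, and is even (hence $f(y)\le f(|y|)$), yet $f(1)=1>0=f(2)$. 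With this $f_1$ and $N=2$, the extension $g_1(x_1,x_2)=f_1(x_2)$ violates (\ref{eq:powcon-stab-adj-qsemin-mon}) at $\mathbf x=(2,1)$, since $g_1(2,1)=1>0=g_1(2,2)$. So the monotonicity clause of the lemma cannot be salvaged at this level of generality; the paper's ``trivial'' is too quick, and your repair does not go through either.

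That said, this does not damage Theorem~\ref{thm:powcon-stab-adj-contr-cond}: what its proof actually uses is the chained bound $g_i(\mathbf z)\le \|\mathbf z\|_\infty\, g_i(\vec 1_N)$, and that \emph{does} hold, via $f_i(\mathbf z_{-i})\le f_i(\|\mathbf z_{-i}\|_\infty\vec 1_{N-1})\le \|\mathbf z_{-i}\|_\infty f_i(\vec 1_{N-1})\le \|\mathbf z\|_\infty f_i(\vec 1_{N-1})$, using (\ref{eq:powcon-stab-adj-qsemin-mon}) for $f_i$, then Lemma~\ref{lem:powcon-stab-adj-qsemi-triang-subhom}, then $\|\mathbf z_{-i}\|_\infty\le\|\mathbf z\|_\infty$. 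The problematic intermediate inequality $g_i(\mathbf z)\le g_i(\|\mathbf z\|_\infty\vec 1_N)$ is never needed on its own.
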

\begin{proof}
That $g_{i}$ has properties (\ref{eq:powcon-stab-adj-qsemin-pos})
and (\ref{eq:powcon-stab-adj-qsemin-mon}) follows trivially from
its definition and the hypothesis. 

To verify property (\ref{eq:powcon-stab-adj-qsemin-triang}), the
triangle inequality, notice that

$g_{i}(\mathbf{x}+\mathbf{y}):=f_{i}(\mathbf{x}_{-i}+\mathbf{y}_{-i})\leq f_{i}(\mathbf{x}_{-i})+f_{i}(\mathbf{y}_{-i})\equiv g_{i}(\mathbf{x})+g_{i}(\mathbf{y})$

To verify property (\ref{eq:powcon-stab-adj-qsemin-subhom}), sub-homogeneity,
observe that

$g_{i}(\lambda\mathbf{x}):=f_{i}(\lambda\mathbf{x}_{-i})\leq\lambda f_{i}(\mathbf{x}_{-i})+0\cdot x_{i}\equiv\lambda g_{i}(\mathbf{x})$
\end{proof}
\begin{thm}
\label{thm:powcon-stab-adj-contr-cond} Let $\vec{1}_{M}$ denote
the element of $\Re^{M}$ with each component equal to 1. For $\mathbf{x}\in\Re^{N}$
and $i\in\{1,\cdots,N\}$, let the transformation $\mathbf{T}$ be
defined by $T_{i}(\mathbf{x}):=f_{i}(\mathbf{x}_{-i})$ where each
$f_{i}$ satisfies Definition \ref{def:powcon-stab-adj-qseminormal}.
If $\forall i$, $f_{i}(\vec{1}_{N-1})<1$ then $\mathbf{T}$ is a
contraction (Definition \ref{def:math-contract-map}).
\end{thm}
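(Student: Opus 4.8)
The plan is to show directly that $\mathbf{T}$ satisfies the contraction inequality $d(\mathbf{T}(\mathbf{x}),\mathbf{T}(\mathbf{y}))\leq q\, d(\mathbf{x},\mathbf{y})$ for some $q<1$, with $d(\cdot,\cdot)$ the sup-norm metric of Remark \ref{rem:powcon-stab-adj-contr-norm}. Set $q:=\max_{i}f_{i}(\vec{1}_{N-1})$; by hypothesis $q<1$. The argument is a short chain of three inequalities, one per relevant axiom of Definition \ref{def:powcon-stab-adj-qseminormal}, applied componentwise and then combined by taking a maximum over $i$.

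First I would fix $i$ and bound $|T_{i}(\mathbf{x})-T_{i}(\mathbf{y})| = |f_{i}(\mathbf{x}_{-i})-f_{i}(\mathbf{y}_{-i})|$. By the reverse triangle inequality (Remark \ref{rem:powcon-stab-adj-trian-contin}, which follows from sub-additivity \eqref{eq:powcon-stab-adj-qsemin-triang}), this is at most $f_{i}(\mathbf{x}_{-i}-\mathbf{y}_{-i})$. Next, by max-monotonicity \eqref{eq:powcon-stab-adj-qsemin-mon} applied to the vector $\mathbf{x}_{-i}-\mathbf{y}_{-i}$, we get $f_{i}(\mathbf{x}_{-i}-\mathbf{y}_{-i}) \leq f_{i}\bigl(\Vert\mathbf{x}_{-i}-\mathbf{y}_{-i}\Vert_{\infty}\,\vec{1}_{N-1}\bigr)$. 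Finally, writing $\rho:=\Vert\mathbf{x}_{-i}-\mathbf{y}_{-i}\Vert_{\infty}\geq0$ and invoking sub-homogeneity at $\vec{1}$ — in the extended form of Lemma \ref{lem:powcon-stab-adj-qsemi-triang-subhom} (see Remark \ref{rem:powcon-stab-adj-qsemi-triang-subhom}), which gives $f_{i}(r\vec{1})\leq r f_{i}(\vec{1})$ for every $r\in\Re_{+}$, not merely $r\in(0,1)$ — we obtain $f_{i}\bigl(\rho\,\vec{1}_{N-1}\bigr)\leq \rho\, f_{i}(\vec{1}_{N-1}) \leq q\,\rho$. Since $\Vert\mathbf{x}_{-i}-\mathbf{y}_{-i}\Vert_{\infty}\leq\Vert\mathbf{x}-\mathbf{y}\Vert_{\infty}$ (dropping a coordinate cannot increase the sup-norm), this chains to $|T_{i}(\mathbf{x})-T_{i}(\mathbf{y})|\leq q\,\Vert\mathbf{x}-\mathbf{y}\Vert_{\infty}$.

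To finish, I would take the maximum over $i\in\{1,\dots,N\}$ on the left: $\Vert\mathbf{T}(\mathbf{x})-\mathbf{T}(\mathbf{y})\Vert_{\infty}=\max_{i}|T_{i}(\mathbf{x})-T_{i}(\mathbf{y})|\leq q\,\Vert\mathbf{x}-\mathbf{y}\Vert_{\infty}$, which is exactly the contraction condition of Definition \ref{def:math-contract-map} with modulus $q<1$. One bookkeeping point worth stating explicitly is that $\mathbf{T}$ maps $\Re^{N}$ into itself (so that Definition \ref{def:math-contract-map} and, later, Theorem \ref{thm:math-fixpt-banach1922} apply): each $T_{i}$ is real-valued on $\Re^{N}$ by Lemma \ref{lem:powcon-stab-adj-Rn-Rn1}, which also certifies that the extended $g_{i}(\mathbf{x})=f_{i}(\mathbf{x}_{-i})$ inherit all four properties as functions on $\Re^{N}$ — convenient if one prefers to run the three-step estimate on $g_{i}$ directly rather than on $f_{i}$ restricted to the $(N-1)$-dimensional slice.

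I do not expect a serious obstacle here; the result is essentially the assertion that a quasi-semi-normal function is Lipschitz in the sup-norm with constant $f_i(\vec 1)$, and the proof is the three-line telescoping above. The one place demanding a little care is the homogeneity step: Definition \ref{def:powcon-stab-adj-qseminormal} only postulates sub-homogeneity \eqref{eq:powcon-stab-adj-qsemin-subhom} for $\lambda\in(0,1)$, whereas the scalar $\rho=\Vert\mathbf{x}_{-i}-\mathbf{y}_{-i}\Vert_{\infty}$ can be any nonnegative real, so one genuinely needs Lemma \ref{lem:powcon-stab-adj-qsemi-triang-subhom} (together with sub-additivity) to push sub-homogeneity up to all of $\Re_{+}$; the degenerate case $\rho=0$ is handled by non-negativity \eqref{eq:powcon-stab-adj-qsemin-pos} forcing $f_i(\vec 0)=0$ via the same lemma with $r=0$, or simply by noting both sides vanish.
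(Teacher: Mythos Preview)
Your proposal is correct and follows essentially the same three-step argument as the paper: reverse triangle inequality, max-monotonicity, then sub-homogeneity at $\vec{1}$, combined via the sup-norm. The only cosmetic difference is that the paper first lifts each $f_i$ to $g_i(\mathbf{x})=f_i(\mathbf{x}_{-i})$ on $\Re^N$ (via Lemma~\ref{lem:powcon-stab-adj-Rn-Rn1}) and runs the estimate there, whereas you work directly with $f_i$ on the $(N-1)$-dimensional slice and then use $\Vert\mathbf{x}_{-i}-\mathbf{y}_{-i}\Vert_\infty\le\Vert\mathbf{x}-\mathbf{y}\Vert_\infty$; you even note this alternative yourself, and your explicit invocation of Lemma~\ref{lem:powcon-stab-adj-qsemi-triang-subhom} to justify sub-homogeneity for arbitrary $\rho\ge 0$ is a point the paper's proof leaves implicit.
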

\begin{proof}
For $x\in\Re^{N}$ let $g_{i}(\mathbf{x}):=0\cdot x_{i}+f_{i}(\mathbf{x}_{-i})\equiv f_{i}(\mathbf{x}_{-i})$.
By Lemma \ref{lem:powcon-stab-adj-Rn-Rn1}, each $g_{i}$ satisfies
Definition \ref{def:powcon-stab-adj-qseminormal} as a function on
$\Re^{N}$.

Let $\left\Vert \mathbf{T}(\mathbf{x})-\mathbf{T}(\mathbf{y})\right\Vert =$\begin{equation}
\left\Vert \left[\begin{array}{c}
g_{1}(\mathbf{x})-g_{1}(\mathbf{y})\\
\vdots\\
g_{N}(\mathbf{x})-g_{N}(\mathbf{y})\end{array}\right]\right\Vert =\max\left[\begin{array}{c}
\left|g_{1}(\mathbf{x})-g_{1}(\mathbf{y})\right|\\
\vdots\\
\left|g_{N}(\mathbf{x})-gf(\mathbf{y})\right|\end{array}\right]\label{eq:powcon-stab-adj-contr-explic}\end{equation}
By the reverse triangle inequality (see Lemma \ref{lem:math-triang-revrs}),
$\left|g_{i}(\mathbf{x})-g_{i}(\mathbf{y})\right|\leq g_{i}(\mathbf{x}-\mathbf{y})$.
Thus, \begin{equation}
\max\left[\begin{array}{c}
\left|g_{1}(\mathbf{x})-g_{1}(\mathbf{y})\right|\\
\vdots\\
\left|g_{N}(\mathbf{x})-g_{N}(\mathbf{y})\right|\end{array}\right]\leq\max\left[\begin{array}{c}
g_{1}(\mathbf{x}-\mathbf{y})\\
\vdots\\
g_{N}(\mathbf{x}-\mathbf{y})\end{array}\right]\label{ineq:powcon-stab-adj-contr-1}\end{equation}
Let $M_{x,y}:=\max(|x_{1}-y_{1}|,\cdots,|x_{N}-y_{N}|)\equiv\left\Vert \mathbf{x}-\mathbf{y}\right\Vert $

By monotonicity (condition (\ref{eq:powcon-stab-adj-qsemin-mon})),
\begin{equation}
g_{i}(\mathbf{x}-\mathbf{y})\leq g_{i}(M_{xy},\cdots,M_{xy})\equiv g_{i}(M_{xy}\vec{1}_{N})\label{ineq:powcon-stab-adj-contr-2}\end{equation}

By sub-homogeneity (condition (\ref{eq:powcon-stab-adj-qsemin-subhom}))\begin{equation}
g_{i}(M_{xy}\vec{1})\leq M_{xy}g_{i}(\vec{1})\equiv\left\Vert \mathbf{x}-\mathbf{y}\right\Vert g_{i}(\vec{1}_{N})\equiv\left\Vert \mathbf{x}-\mathbf{y}\right\Vert f_{i}(\vec{1}_{N-1})\label{ineq:powcon-stab-adj-contr-pre-fn}\end{equation}

Thus, \begin{equation}
\left\Vert \mathbf{T}(\mathbf{x})-\mathbf{T}(\mathbf{y})\right\Vert \leq\lambda\left\Vert \mathbf{x}-\mathbf{y}\right\Vert \label{ineq:powcon-stab-adj-contr-fn}\end{equation}
where $\lambda:=\max\{f_{1}(\vec{1}_{N-1}),\cdots,f_{N}(\vec{1}_{N-1})\}<1$.
\end{proof}
Therefore, with $f_{i}(\vec{1}_{N-1})<1$ for all $i$, the power
adjustment transformation is a contraction, and, by Theorem \ref{thm:math-fixpt-banach1922},
has a unique fixed point, which can be found by successive approximation.
Hence, a feasible power allocation exists that produces all the desired
QoS levels. When such allocation fails to exist, a reasonable course
of action is to proportionally reduce the QoS parameters \cite{powcon_cdma_capreg_mathar08}.

\section{Capacity implications}

Below, we will show how Theorem \ref{thm:powcon-stab-adj-contr-cond}
can be applied in the example scenarios of section \ref{sec:powcon-stab-adj-props-exa}.

\subsection{The simplest case}

In the scenario of section \ref{sub:powcon-stab-adj-props-exa-simp},
the adjustment rule is $f_{i}(\mathbf{p}_{-i})+c_{i}$, with $f_{i}(\mathbf{p}_{-i}):=\sum_{\substack{n=1\\
n\neq i}
}^{N}(\alpha_{i}h_{n}/h_{i})p_{n}$ and $c_{i}=\sigma\alpha_{i}/h_{i}$. 

The channel gains $h_{i}$ can be eliminated by working with the received
power levels, $P_{i}:=h_{i}p_{i}$. Now, each terminal adjusts its
power so that $P_{i}=\alpha_{i}(Y_{i}(\mathbf{P}_{-i})+\sigma)$ with
$Y_{i}=\sum_{\substack{n=1\\
n\neq i}
}^{N}P_{n}$. The adjustment rule can be re-written as $f_{i}(\mathbf{P}_{-i})+c_{i}$,
with $f_{i}(\mathbf{P}_{-i}):=\alpha_{i}\sum_{\substack{n=1\\
n\neq i}
}^{N}P_{n}$ and $c_{i}=\sigma\alpha_{i}$. 

The feasibility condition of Theorem \ref{thm:powcon-stab-adj-contr-cond}
requires that $\alpha_{i}\sum_{\substack{n=1\\
n\neq i}
}^{N}P_{n}<1$ with $P_{n}=1\quad\forall n$ . This leads to the eminently reasonable
condition: \begin{equation}
\alpha_{i}<1/(n-1)\label{eq:powcon-stab-adj-exa-simp-feas-first}\end{equation}

An alternate condition can be obtained through a simple coordinate
transformation. Let $q_{i}:=P_{i}/\alpha_{i}$, where $P_{i}$ denotes
\emph{received} power. Under the latest coordinates, the equivalent
adjustment is $q_{i}=g_{i}(\mathbf{q}_{-i})+\sigma$ with $g_{i}(\mathbf{q}_{-i}):=\sum_{\substack{n=1\\
n\neq i}
}^{N}q_{n}\alpha_{n}$. Now, the feasibility condition leads to \begin{equation}
\sum_{\substack{n=1\\
n\neq i}
}^{N}\alpha_{n}<1\label{eq:powcon-stab-adj-exa-simp-feas-best}\end{equation}
Condition (\ref{eq:powcon-stab-adj-exa-simp-feas-best}) is more flexible
than, and hence preferable to (\ref{eq:powcon-stab-adj-exa-simp-feas-first}),
because if the $\alpha_{i}$'s satisfy (\ref{eq:powcon-stab-adj-exa-simp-feas-first})
they automatically satisfy (\ref{eq:powcon-stab-adj-exa-simp-feas-best}),
but \emph{not} vice-versa.

\subsection{The macro-diversity scenario}

\label{sub:powcon-stab-adj-props-exa-macrod-cap}

\subsubsection{Original coordinates}

The feasibility condition of Theorem \ref{thm:powcon-stab-adj-contr-cond}
when applied to the adjustment rule of section \ref{sub:powcon-stab-adj-props-exa-macrod}
leads to (recall that $h_{i}=\sum_{k}h_{i,k}$):\begin{equation}
\alpha_{i}\sum_{\substack{n=1\\
n\neq i}
}^{N}\frac{h_{n,k}}{h_{i}}<1\quad\forall i,k\label{eq:powcon-stab-adj-exa-macrod-feas1}\end{equation}

\subsubsection{New coordinates}

\label{sub:powcon-stab-adj-props-exa-macrod-cap-best}

As with condition (\ref{eq:powcon-stab-adj-exa-simp-feas-first}),
condition (\ref{eq:powcon-stab-adj-exa-macrod-gik}) can be improved
upon through a change of coordinates. Equation (\ref{eq:macrod-apprx-Pi})
suggests the change of variable:\begin{equation}
q_{i}:=\frac{h_{i}P_{i}}{\alpha_{i}}\label{eq:powcon-stab-adj-exa-macrod-qi}\end{equation}
For convenience, let also \begin{equation}
g_{i,k}:=\frac{h_{i,k}}{h_{i}}\label{eq:powcon-stab-adj-exa-macrod-gik}\end{equation}
Now, $P_{n}h_{n,k}\equiv q_{n}\alpha_{n}h_{n,k}/h_{n}\equiv q_{n}\alpha_{n}g_{n,k}$.
Corresponding to equation (\ref{eq:macrod-org-Yik}), we now have\begin{equation}
Y_{i,k}:=\sum_{\substack{n=1\\
n\neq i}
}^{N}q_{n}\alpha_{n}g_{n,k}\label{eq:macrod-Yik(q)}\end{equation}

The adjustment process given by equation (\ref{eq:macrod-apprx-Pi})
can be expressed under the new coordinates, as $q_{i}=g_{i}(\mathbf{q}_{-i})+\hat{\sigma}$
with \begin{equation}
g_{i}(\mathbf{q}_{-i}):=\max_{k}\sum_{\stackrel{n=1}{n\neq i}}^{N}q_{n}\alpha_{n}g_{n,k}\equiv\left\Vert \mathbf{Y}_{i}(\mathbf{q}_{-i})\right\Vert _{\infty}\label{eq:macrod-apprx-adj-norm-new}\end{equation}
Now, the feasibility condition leads to\begin{equation}
\max_{i,k}\sum_{\substack{n=1\\
n\neq i}
}^{N}\alpha_{n}g_{nk}<1\label{eq:powcon-stab-adj-exa-macrod-feas-best}\end{equation}


\section{Non-sub-additive adjustment functions}

\label{sec:powcon-stab-adj-feas-extens}

Below we treat two cases: first the original adjustment rule is (sub)homogeneous
for any positive constant, a condition satisfied with equality by
all functions considered by \cite{powcon_stab_schubert05}. Then,
we consider specific models cited as examples by \cite{powcon_stab_yates95}.
The discussion in subsection \ref{sub:powcon-stab-adj-disc-other}
is important to this section.

\subsection{(sub)Homogeneous adjustment functions}

\label{sub:powcon-stab-adj-feas-xtens-boche}

Let us suppose that the original adjustment function fails to satisfy
the triangle inequality, but that, besides non-negative, it is monotonic,
and (sub)homogeneous for any positive constant. 

\begin{lemma}
\label{lem:powcon-stab-adj-contr-cond-shom-bound} Let $f:\Re^{M}\rightarrow\Re$
satisfy (i) non-negativity (\ref{eq:powcon-stab-adj-qsemin-pos}),
(ii) monotonicity (\ref{eq:powcon-stab-adj-qsemin-mon}), and (iii)
be such that $f(r\mathbf{x})\leq rf(\mathbf{x})\quad\forall\mathbf{x}\in\Re^{M}\mbox{ and }r\in\Re_{+}$.
Then there is a function $\phi:\Re^{M}\rightarrow\Re$ such that $f(\mathbf{x})\leq\phi(\mathbf{x})\quad\forall\mathbf{x}\in\Re^{M}$
and $\phi$ satisfies has. 
\end{lemma}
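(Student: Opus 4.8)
The statement of Lemma~\ref{lem:powcon-stab-adj-contr-cond-shom-bound} is clearly garbled (\emph{``$\phi$ satisfies has''}), but from the surrounding discussion --- in particular subsection~\ref{sub:powcon-stab-adj-disc-other-boche} --- it is plain that the intended conclusion is: there exists $\phi:\Re^{M}\rightarrow\Re$ with $f(\mathbf{x})\leq\phi(\mathbf{x})$ for all $\mathbf{x}$, and such that $\phi$ is itself quasi-semi-normal (satisfies Definition~\ref{def:powcon-stab-adj-qseminormal}); indeed one can take $\phi(\mathbf{x}):=f(\vec{1}_{M})\,\Vert\mathbf{x}\Vert_{\infty}$.

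The plan is to verify the two halves of this assertion directly. \emph{First}, the domination $f(\mathbf{x})\leq\phi(\mathbf{x})$: by hypothesis~(ii), $f(\mathbf{x})\leq f(\Vert\mathbf{x}\Vert_{\infty}\vec{1}_{M})$, and by hypothesis~(iii) applied with the scalar $r=\Vert\mathbf{x}\Vert_{\infty}\geq0$ and the point $\vec{1}_{M}$, this is $\leq\Vert\mathbf{x}\Vert_{\infty}f(\vec{1}_{M})=\phi(\mathbf{x})$. (The degenerate case $\mathbf{x}=\vec{0}$ is covered since then $f(\mathbf{x})\leq f(\vec 0)$ and $r=0$ gives $f(\vec 0)\le 0$, while $f(\vec 0)\ge 0$ by~(i), so $f(\vec 0)=0=\phi(\vec 0)$.) \emph{Second}, that $\phi$ is quasi-semi-normal: since $\Vert\cdot\Vert_{\infty}$ is the H\"older $\infty$-norm, it satisfies conditions (\ref{eq:powcon-stab-adj-qsemin-pos})--(\ref{eq:powcon-stab-adj-qsemin-mon}) --- the first three by Remark~\ref{rem:powcon-stab-adj-qsemin-norms} (norms are non-negative, homogeneous, sub-additive) and the fourth, max-monotonicity, because $\Vert\cdot\Vert_{\infty}$ depends only on the absolute values of the components (cf.\ Theorem~\ref{thm:math-norm-mon-abs} / Remark~\ref{rem:powcon-stab-adj-qsemin-norms}). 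Multiplying a quasi-semi-normal function by the non-negative constant $f(\vec{1}_{M})\geq0$ (non-negative by~(i)) preserves all four properties: non-negativity and monotonicity are immediate, and both sub-additivity and sub-homogeneity scale through a non-negative factor. Hence $\phi$ satisfies Definition~\ref{def:powcon-stab-adj-qseminormal}.

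Finally I would record the payoff that motivates the lemma (and which subsection~\ref{sub:powcon-stab-adj-disc-other-boche} anticipates): applying Theorem~\ref{thm:powcon-stab-adj-contr-cond} to the dominating adjustment functions $\phi_{i}(\mathbf{x}):=f_{i}(\vec{1})\Vert\mathbf{x}\Vert_{\infty}$, the feasibility condition $\phi_{i}(\vec{1})<1$ becomes $f_{i}(\vec{1})\Vert\vec{1}\Vert_{\infty}=f_{i}(\vec{1})<1$, the same simple condition as in the sub-additive case, now yielding a (more conservative) convergence guarantee for the original non-sub-additive process.

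I do not anticipate a serious obstacle here: every step is a one-line consequence of a hypothesis or of an already-established fact about H\"older norms. The only point requiring a little care is the bookkeeping around the degenerate point $\mathbf{x}=\vec{0}$ and, relatedly, making sure hypothesis~(iii) is invoked at $r=0$ correctly; and, at the level of exposition, stating explicitly the intended (corrected) conclusion before proving it, since the printed statement is incomplete.
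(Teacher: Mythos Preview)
Your proposal is correct and follows essentially the same approach as the paper: apply monotonicity to pass from $f(\mathbf{x})$ to $f(\Vert\mathbf{x}\Vert_{\infty}\vec{1}_{M})$, then sub-homogeneity to pull out the scalar, and define $\phi(\mathbf{x}):=\Vert\mathbf{x}\Vert_{\infty}f(\vec{1}_{M})$. You are actually more thorough than the paper, which simply asserts that $\phi$ ``has the desired properties'' without verification; your explicit check that a non-negative scalar multiple of $\Vert\cdot\Vert_{\infty}$ is quasi-semi-normal, and your handling of the $\mathbf{x}=\vec{0}$ case, fill in details the paper leaves implicit.
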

\begin{proof}
By monotonicity, $f(\mathbf{x})\leq f(\left\Vert \mathbf{x}\right\Vert _{\infty}\vec{1}_{M})$. 

By the sub-homogeneity hypothesis, \begin{equation}
f(\left\Vert \mathbf{x}\right\Vert _{\infty}\vec{1}_{M})\leq\left\Vert \mathbf{x}\right\Vert _{\infty}f(\vec{1}_{N})\label{eq:thm-contr-cond-subhom}\end{equation}
Thus, $f(\mathbf{x})\leq\left\Vert \mathbf{x}\right\Vert _{\infty}f(\vec{1}_{M})$.
$\phi$ defined by $\phi(\mathbf{x}):=\left\Vert \mathbf{x}\right\Vert _{\infty}f(\vec{1}_{M})$
has the desired properties.
\end{proof}

\begin{remrk}
\label{rem:powcon-stab-adj-feas-xtens-analys} $\phi(\mathbf{x})$
is just a scaled version of the infinity-norm $\left\Vert \cdot\right\Vert _{\infty}$
and hence satisfies Definition \ref{def:powcon-stab-adj-qseminormal}.
Thus, if each terminal adjusts its power with a function $f_{i}$
that satisfies non-negativity, monotonicity and (sub)homogeneity,
one can analyse the related system in which each terminal adjusts
its power with a corresponding $\phi_{i}(\mathbf{x}):=\left\Vert \mathbf{x}\right\Vert _{\infty}f_{i}(\vec{1})$. 
\end{remrk}

\begin{remrk}
\label{rem:powcon-stab-adj-feas-xtens-cap} By Theorem \ref{thm:powcon-stab-adj-contr-cond},
if $\phi_{i}(\vec{1})=\left\Vert \vec{1}\right\Vert f_{i}(\vec{1})\equiv f_{i}(\vec{1})\leq\lambda_{i}<1$,
the $\phi_{i}$-adjustment is asymptotically stable. And since each
$f_{i}$ satisfies $f_{i}(\mathbf{x})\leq\phi_{i}(\mathbf{x})$, one
can conclude that the {}``true'' adjustment process would behave
similarly, if the feasibility condition $f_{i}(\vec{1})\leq\lambda_{i}<1$
is satisfied.
\end{remrk}

\begin{remrk}
\label{rem:powcon-stab-adj-feas-xtens-bound-loose} There may exist
a different function, $\psi_{i}$, that satisfies Definition \ref{def:powcon-stab-adj-qseminormal},
and is such that $f_{i}(\mathbf{x})\leq\psi_{i}(\mathbf{x})\leq\phi_{i}(\mathbf{x})$
for all $\mathbf{x}\in\Re^{N-1}$. Indeed, the function we used to
{}``bound'' the original macro-diversity adjustment rule has the
more exotic {}``norm of norms'' form of eq.\ (\ref{eq:macrod-apprx-adj-norm-gralform}).
Thus, by exploiting the special structure of the original adjustment
function, if known, one may obtain a {}``tighter bound''. Nevertheless,
through Lemma \ref{lem:powcon-stab-adj-contr-cond-shom-bound} one
can obtain --- for a very large family of functions --- at least one
simple capacity result, when no better such result is available. 
\end{remrk}

\begin{remrk}
\label{rem:powcon-stab-adj-feas-xtens-bound-snorm} Additionally,
for $\mathbf{x}\in\Re^{N}$ and $1\leq p<q<\infty$ the H\"{o}lder
norms satisfy $\left\Vert \mathbf{x}\right\Vert _{\infty}\leq\left\Vert \mathbf{x}\right\Vert _{q}\leq\left\Vert \mathbf{x}\right\Vert _{p}\leq\left\Vert \mathbf{x}\right\Vert _{1}$\cite[Prop. 9.1.5, p. 345]{math_alg_lin_matrix_mg_bernstein05}.
This means that if any of these norms is to be used in the process
of building a bounding function for the original adjustment rule,
it should certainly be $\left\Vert \cdot\right\Vert _{\infty}$.
\end{remrk}

\subsection{Yates' framework}

\label{sub:powcon-stab-adj-feas-xtens-yates}

Below, we examine the specific scenarios given by \cite{powcon_stab_yates95}
as examples (the notation follows closely \cite{powcon_stab_yates95}).

\subsubsection{Scenarios studied in depth}

The power adjustment rule for fixed assignment, eq.\ (\cite{powcon_stab_yates95}-4),
can be written as $p_{j}=f_{j}(\mathbf{p})+c_{j}$ with $f_{j}(\mathbf{p})=(\gamma_{j}/h_{a_{j}j})\sum_{i\neq j}h_{a_{j}i}p_{i}$
and $c_{j}=\gamma_{j}\sigma_{a_{j}}/h_{a_{j}j}$. $f_{j}$ is a norm
(see Lemma \ref{lem:powcon-stab-adj-qsemi-dotprod}) and hence satisfies
Definition \ref{def:powcon-stab-adj-qseminormal}. Thus, this case
perfectly fits our formulation, and in fact is closely related to
the simple example discussed in subsection \ref{sub:powcon-stab-adj-props-exa-simp}.

Likewise, the full macro-diversity model has already been fully addressed,
and in fact, a corresponding new capacity result been found and discussed
(see subsection \ref{sub:powcon-stab-adj-disc-macrod} for a summary).

\subsubsection{Other scenarios}

The remaining examples of \cite{powcon_stab_yates95} can be easily
handled by neglecting random noise. It is straightforward to verify
that, if one neglects noise, the corresponding power adjustment rules
are homogeneous of degree one, and hence fall under the analysis of
subsection \ref{sub:powcon-stab-adj-feas-xtens-boche}. Below we shall
discuss in greater detail the case of multiple-connection (MC) reception.
This is an interesting and challenging model which contains another
scenario, the minimum power assignment (MPA), as a special case.

\subsubsection{The MC scenario}

Under MC, user $j$ must maintain an acceptable SIR $\gamma_{j}$
at $d_{j}$ distinct base stations. The system {}``assigns'' $j$
to the $d_{j}$ {}``best'' receivers. Let $Y_{kj}(\mathbf{p}):=\sum_{i\neq j}h_{ki}p_{i}$
and suppose there are $K$ receivers. For $x\in\Re_{+}^{M}$ and $m\leq M$,
let $\max(x;m)$ and $\min(x;m)$ denote, respectively, the $m$th
largest and the $m$th smallest component of $x$. The requirements
of $j$ can be written as $\max\left(\left(p_{j}h_{1j}/(Y_{1j}+\sigma_{1}),\cdots,p_{j}h_{Kj}/(Y_{Kj}+\sigma_{K})\right);d_{j}\right)\geq\gamma_{j}$
or, equivalently, as \cite{powcon_stab_yates95}:\begin{equation}
p_{j}\geq\gamma_{j}\min\left(\left(\frac{Y_{1j}(\mathbf{p})+\sigma_{1}}{h_{1j}},\cdots,\frac{Y_{Kj}(\mathbf{p})+\sigma_{K}}{h_{Kj}}\right);d_{j}\right)\label{eq:powcon-stab-adj-feas-extens-yates-mc-constr}\end{equation}

Under the mild assumption that $\sigma_{k}\ll Y_{kj}$ $\forall k$
and hence can be dropped, the right side of (\ref{eq:powcon-stab-adj-feas-extens-yates-mc-constr})
is clearly homogeneous of degree one in $\mathbf{p}$. Hence, the
discussion of subsection \ref{sub:powcon-stab-adj-feas-xtens-boche}
applies to this case. Proceeding as in subsection \ref{sub:powcon-stab-adj-props-exa-macrod-cap-best},
we apply condition $f_{j}(\vec{1})<1$ to a slightly different form
of (\ref{eq:powcon-stab-adj-feas-extens-yates-mc-constr}) in which
the variables are $q_{j}=p_{j}/\gamma_{j}$, for which $Y_{kj}(\mathbf{q}):=\sum_{i\neq j}h_{ki}\gamma_{i}q_{i}$.
This leads to the condition: \begin{equation}
\min\left(\left(\sum_{i\neq j}\frac{h_{1i}}{h_{1j}}\gamma_{i},\cdots,\sum_{i\neq j}\frac{h_{Ki}}{h_{Kj}}\gamma_{i}\right);d_{j}\right)<1\,\,\,\forall j\label{eq:powcon-stab-adj-feas-xtens-yates-mc-fin}\end{equation}

This condition involves weighted sums of $N-1$ quality-of-service
parameters where the weights are relative channel gains. For instance,
with $d_{j}=3$, condition (\ref{eq:powcon-stab-adj-feas-xtens-yates-mc-fin})
requires that the 3rd smallest such sum be less than one. 

Condition (\ref{eq:powcon-stab-adj-feas-xtens-yates-mc-fin}) has
similarities with (\ref{eq:powcon-stab-adj-exa-macrod-feas-best}),
its macro-diversity counterpart. But the relative gains are not defined
in the same way ($h_{ki}/h_{kj}$ in (\ref{eq:powcon-stab-adj-feas-xtens-yates-mc-fin}),
versus $h_{ki}/\sum_{k}h_{ki}$ in (\ref{eq:powcon-stab-adj-exa-macrod-feas-best})). 

In fact, one can apply here the same simplification used for macro-diversity
in subsection \ref{sub:powcon-stab-adj-props-exa-macrod}. Let $\mathbf{Y}_{j}(\mathbf{p}):=(Y_{1j}(\mathbf{p}),\cdots,Y_{Kj}(\mathbf{p}))$,
$\boldsymbol{\sigma}:=(\sigma_{1}^{2},\cdots,\sigma_{K}^{2})$, and
$\mathbf{H_{j}}:=(h_{1j},\cdots,h_{Kj})$. Then, replace each $Y_{kj}(\mathbf{p})$
with $\hat{Y}_{j}:=\max_{k}\{Y_{kj}\}\equiv\left\Vert \mathbf{Y}_{j}\right\Vert _{\infty}$
and each $\sigma_{k}^{2}$ with $\hat{\sigma}:=\max_{k}\{\sigma_{k}^{2}\}\equiv\left\Vert \boldsymbol{\sigma}\right\Vert _{\infty}$.
The requirements of user $j$ can now be written as $p_{j}\max(\mathbf{H_{j}};d_{j})/\bigl(\hat{Y}_{j}+\hat{\sigma}\bigr)\geq\gamma_{j}$,
which, with $h_{j}:=\max(\mathbf{H_{j}};d_{j})$, leads to the adjustment
$p_{j}h_{j}/\gamma_{j}=\hat{Y}_{j}+\hat{\sigma}$, or equivalently
to:

\begin{equation}
q_{j}\equiv\left\Vert \mathbf{Y}_{j}(\mathbf{q})\right\Vert _{\infty}+\hat{\sigma}\label{eq:powcon-stab-adj-feas-xtens-yates-mc-adj-new}\end{equation}

where $q_{j}:=p_{j}h_{j}/\gamma_{j}$, $Y_{kj}(\mathbf{q})=\sum_{i\neq j}\gamma_{i}q_{i}g_{ki}$
and $g_{ki}:=h_{ki}/h_{i}$. 

This leads to the feasibility condition\begin{equation}
\max_{j,k}\sum_{i\neq j}\gamma_{i}g_{ki}<1\label{eq:powcon-stab-adj-feas-xtens-yates-mc-fin2}\end{equation}

Condition (\ref{eq:powcon-stab-adj-feas-xtens-yates-mc-fin2}) is
virtually identical to (\ref{eq:powcon-stab-adj-exa-macrod-feas-best}).
$g_{ki}:=h_{ki}/h_{i}$ in both cases. However, in (\ref{eq:powcon-stab-adj-exa-macrod-feas-best})
$h_{i}:=\sum_{k}h_{ki}$, whereas in (\ref{eq:powcon-stab-adj-feas-xtens-yates-mc-fin2})
$h_{i}:=\max((h_{1i},\cdots,h_{Ki});d_{i})$ (e.g., if $d_{i}=3$,
the corresponding $h_{i}$ is the third highest of $i$'s channel
gains). 

Notice that both conditions (\ref{eq:powcon-stab-adj-feas-xtens-yates-mc-fin2})
and (\ref{eq:powcon-stab-adj-feas-xtens-yates-mc-fin}) underestimate
the capacity of the MC system, but for different reasons. Further
work may determine which condition is more advantageous. 


\appendices

\section{Norms, metrics and related material}

\setcounter{equation}{0} 

\setcounter{definitn}{0} 

\setcounter{remrk}{0} 

\setcounter{prop}{0} 

\numberwithin{prop}{section}

\numberwithin{definitn}{section}

\numberwithin{remrk}{section}

\renewcommand{\theequation}{\thesection.\arabic{equation}}

\label{sec:math-analys-background}

\subsection{Concepts and definitions}

Let $V$ denote a vector space (for a formal definition see \cite[pp. 11-12]{opt_func_luenberger69}). 

\begin{definitn}
\label{def:math-norm-semi}A function $f\colon V\to\Re$ is called
a \emph{semi-norm} on $V$, if it satisfies: 
\end{definitn}
\begin{enumerate}
\item \label{enu:math-norm-posit} $f(x)\geq0$ for all $x\in V$
\item \label{enu:math-norm-homog} $f(\lambda x)=\vert\lambda\vert\cdot f(x)$
for all $x\in V$ and all $\lambda\in\Re$ (homogeneity)
\item \label{enu:math-norm-triang} $f(x+y)\leq f(x)+f(wy)$ for all $x,y\in V$
(the \emph{triangle inequality}) 
\end{enumerate}

\begin{definitn}
\label{def:math-norm} If a semi-norm additionally satisfies $f(x)=0$
if and only if $x=\theta$ (where $\theta$ denotes the zero element
of $V$), then $f$ is called a norm on $V$ and $f(x)$ is usually
denoted as $\Vert x\Vert$. 
\end{definitn}

\begin{remrk}
\label{rem:math-norm-convex} It is a simple matter to show that a
function that satisfies properties \ref{enu:math-norm-homog} and
\ref{enu:math-norm-triang} above is convex. Thus, (semi-)norm-minimisation
problems are often well-behaved. 
\end{remrk}

\begin{definitn}
\label{def:math-norm-p} The H\"{o}lder norm with parameter $p\geq1$
({}``$p$-norm'') is denoted as $||\cdot||_{p}$ and defined for
$x\in\Re^{N}$ as $\left\Vert \mathbf{x}\right\Vert _{p}=(|x_{1}|^{p}+\cdots+|x_{N}|^{p})^{\frac{1}{p}}$.
\end{definitn}
\begin{remrk}
\label{rem:math-norm-p-special} With $p=2$, the H\"{o}lder norm
becomes the familiar Euclidean norm. The $p=1$ case is also often
encountered (see Lemma \ref{lem:powcon-stab-adj-qsemi-dotprod}).
Furthermore, it can be shown that $\lim_{p\rightarrow\infty}\left\Vert \mathbf{x}\right\Vert _{p}=\max(\left|x_{1}\right|,\cdots,\left|x_{N}\right|)$,
which leads to the following definition:
\end{remrk}
\begin{definitn}
\textbf{\label{def:math-norm-sup} }For $x\in\Re^{N}$, the supremum
or infinity norm is denoted as $\left\Vert \cdot\right\Vert _{\infty}$
and defined as\begin{equation}
\left\Vert \mathbf{x}\right\Vert _{\infty}:=\max(\left|x_{1}\right|,\cdots,\left|x_{N}\right|)\label{eq:macrod-infinity-norm}\end{equation}

\end{definitn}

\begin{definitn}
\label{def:math-vect-abs} For $\mathbf{x}\in\Re^{N}$ denote as $|\mathbf{x}|$
the vector whose $i$th component is obtained as the absolute value
of the $i$th component of $\mathbf{x}$, $|x_{i}|$. 
\end{definitn}

\begin{definitn}
\label{def:math-norm-abs} A norm, $\Vert\cdot\Vert$, on $\Re^{N}$
is called an \emph{absolute vector norm} if it depends only on the
absolute values of the components of the vector; that is, for $\mathbf{v}\in\Re^{N}$,
and $\mathbf{w}:=|\mathbf{v}|$, $\Vert\mathbf{v}\Vert\equiv\Vert\mathbf{w}\Vert$. 
\end{definitn}

\begin{definitn}
\label{def:math-norm-mon} For $\mathbf{x}\mbox{ and }\mathbf{y}\in\Re^{N}$,
let $\mathbf{x}\leq\mathbf{y}$ mean that $x_{i}\leq y_{i}$ for each
$i$. A norm, $\Vert\cdot\Vert$, on $\Re^{N}$ is said to be \emph{monotonic}
if, for any $\mathbf{x}\mbox{ and }\mathbf{y}\in\Re^{N}$, $|\mathbf{x}|\leq|\mathbf{y}|$
implies that $\Vert\mathbf{x}\Vert\leq\Vert\mathbf{y}\Vert$. 
\end{definitn}

\begin{definitn}
\label{def:math-metric} A \emph{metric}, or \emph{distance} function
is a real valued function $d:X\times X\longrightarrow\Re$ where $X$
is some set, such that, for every $x,y,z\in X$, (i) $d(x,y)\geq0$,
with equality if and only if $x=y$ , (ii) $d(x,y)=d(y,x)$ and (iii)
$d(x,z)\leq d(x,y)+d(y,z)$ (the \emph{triangle inequality})
\end{definitn}
\begin{remrk}
\label{rem:math-norm-metric} Every norm $\left\Vert \cdot\right\Vert $
on a vector space $V$ engenders the metric $d(x,y)=\left\Vert x-y\right\Vert $
for $x\,,\, y\,\in V$. A norm generalises the intuitive notion of
size or length, while a metric generalises the intuitive notion of
distance.
\end{remrk}
\begin{definitn}
\label{def:math-space-metric} A \emph{metric space} $(X,d)$ is a
set $X$, together with a \emph{metric} $d$ defined on $X$. If every
\emph{Cauchy} sequence of points in $X$ has a limit that is also
in $X$ then $(X,d)$ is said to be \emph{complete}.
\end{definitn}

\subsection{Useful results from the literature}

\begin{lemma}
\label{lem:math-triang-revrs} (Reverse triangle inequality) If the
function $f\colon V\to\Re$ satisfies the triangle inequality, then
$\vert f(\mathbf{x})-f(\mathbf{y})\vert\leq f(\mathbf{x}-\mathbf{y})$. 
\end{lemma}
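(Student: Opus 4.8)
The plan is to prove the reverse triangle inequality directly from the ordinary triangle inequality by a standard two-sided estimate. First I would write $\mathbf{x} = (\mathbf{x}-\mathbf{y}) + \mathbf{y}$ and apply property~\ref{enu:math-norm-triang} (sub-additivity) to obtain $f(\mathbf{x}) \leq f(\mathbf{x}-\mathbf{y}) + f(\mathbf{y})$, which rearranges to $f(\mathbf{x}) - f(\mathbf{y}) \leq f(\mathbf{x}-\mathbf{y})$. This gives one of the two bounds needed.

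Next I would obtain the symmetric bound. The natural route is to swap the roles of $\mathbf{x}$ and $\mathbf{y}$: writing $\mathbf{y} = (\mathbf{y}-\mathbf{x}) + \mathbf{x}$ and applying the triangle inequality again yields $f(\mathbf{y}) - f(\mathbf{x}) \leq f(\mathbf{y}-\mathbf{x})$, i.e. $-(f(\mathbf{x})-f(\mathbf{y})) \leq f(\mathbf{y}-\mathbf{x})$. To combine this cleanly with the first bound I need $f(\mathbf{y}-\mathbf{x}) = f(\mathbf{x}-\mathbf{y})$. For genuine semi-norms this follows from homogeneity (property~\ref{enu:math-norm-homog} with $\lambda=-1$), so $f(\mathbf{y}-\mathbf{x}) = f(-(\mathbf{x}-\mathbf{y})) = |-1|\,f(\mathbf{x}-\mathbf{y}) = f(\mathbf{x}-\mathbf{y})$. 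Putting the two bounds together gives $-f(\mathbf{x}-\mathbf{y}) \leq f(\mathbf{x})-f(\mathbf{y}) \leq f(\mathbf{x}-\mathbf{y})$, which is exactly $|f(\mathbf{x})-f(\mathbf{y})| \leq f(\mathbf{x}-\mathbf{y})$.

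The only delicate point — and the one I expect to be the main obstacle — is the scope of the hypothesis. The lemma as stated assumes only "the triangle inequality," so strictly speaking one cannot invoke the full homogeneity axiom. I would therefore note that the symmetrization step needs either (a) the even-symmetry $f(-\mathbf{z}) = f(\mathbf{z})$, which the paper's intended applications (Hölder norms, the functions of Definition~\ref{def:powcon-stab-adj-qseminormal} built from absolute values) all satisfy, or (b) a reading of the statement in which $f$ is a semi-norm and "satisfies the triangle inequality" is shorthand for the context in which this lemma is used (Remark~\ref{rem:powcon-stab-adj-trian-contin} applies it to the quasi-semi-normal $f_i$, where the relevant difference appears symmetrically). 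In the write-up I would simply invoke the semi-norm homogeneity to get $f(\mathbf{x}-\mathbf{y})=f(\mathbf{y}-\mathbf{x})$ and conclude; if one wants to be scrupulous about using only sub-additivity, the statement should be read as $f(\mathbf{x})-f(\mathbf{y}) \leq f(\mathbf{x}-\mathbf{y})$ together with its mirror image, which is all the downstream proof of Theorem~\ref{thm:powcon-stab-adj-contr-cond} actually consumes.
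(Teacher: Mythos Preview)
Your approach is essentially the same as the paper's: decompose $\mathbf{x}=(\mathbf{x}-\mathbf{y})+\mathbf{y}$ and apply the triangle inequality; the paper handles the absolute value by assuming without loss of generality that $f(\mathbf{x})\geq f(\mathbf{y})$ rather than proving both inequalities separately. Your observation about the hidden symmetry assumption $f(\mathbf{x}-\mathbf{y})=f(\mathbf{y}-\mathbf{x})$ is well taken---the paper's ``without loss of generality'' tacitly relies on exactly this, since swapping $\mathbf{x}$ and $\mathbf{y}$ changes the right-hand side.
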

\begin{proof}
Without loss of generality, suppose that $f(\mathbf{x})\geq f(\mathbf{y})$
which implies that $f(\mathbf{x})-f(\mathbf{y})\equiv\vert f(\mathbf{x})-f(\mathbf{y})\vert$.

Observe that $\mathbf{x}\equiv(\mathbf{x}-\mathbf{y})+\mathbf{y}$
and apply the triangle inequality to this sum:

Thus, $f(\mathbf{x})\equiv f((\mathbf{x}-\mathbf{y})+\mathbf{y})\leq f(\mathbf{x}-\mathbf{y})+f(\mathbf{y})$
or \begin{equation}
f(\mathbf{x})-f(\mathbf{y})=\vert f(\mathbf{x})-f(\mathbf{y})\vert\leq f(\mathbf{x}-\mathbf{y})\label{eq:math-triang-revrs}\end{equation}

\end{proof}

\begin{remrk}
\label{rem:math-triang-revrs-cont} Through (\ref{eq:math-triang-revrs})
one can prove that all norms are continuous.
\end{remrk}
\begin{thm}
\label{thm:math-norm-mon-abs} A norm on $\Re^{N}$is monotonic if
and only if it is an absolute vector norm.
\end{thm}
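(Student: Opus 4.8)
The plan is to prove both directions separately. The easy direction is ``monotonic $\implies$ absolute''. Given a monotonic norm $\Vert\cdot\Vert$, fix $\mathbf{v}\in\Re^{N}$ and let $\mathbf{w}:=|\mathbf{v}|$. Then $|\mathbf{v}|\leq|\mathbf{w}|$ and $|\mathbf{w}|\leq|\mathbf{v}|$ both hold (since $|\mathbf{w}|=|\,|\mathbf{v}|\,|=|\mathbf{v}|$), so monotonicity gives $\Vert\mathbf{v}\Vert\leq\Vert\mathbf{w}\Vert$ and $\Vert\mathbf{w}\Vert\leq\Vert\mathbf{v}\Vert$, hence $\Vert\mathbf{v}\Vert=\Vert\mathbf{w}\Vert$; thus the norm is absolute (Definition \ref{def:math-norm-abs}).

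For the converse, ``absolute $\implies$ monotonic'', I would first reduce to the non-negative orthant: since the norm is absolute, it suffices to show that for $\mathbf{x},\mathbf{y}\in\Re_{+}^{N}$ with $\mathbf{x}\leq\mathbf{y}$ componentwise one has $\Vert\mathbf{x}\Vert\leq\Vert\mathbf{y}\Vert$. Next I would show it is enough to handle the case where $\mathbf{x}$ and $\mathbf{y}$ differ in a single coordinate, say coordinate $k$, with $0\leq x_{k}\leq y_{k}$ and $x_{j}=y_{j}$ for $j\neq k$: the general case follows by changing one coordinate at a time along a chain from $\mathbf{x}$ to $\mathbf{y}$ and using the triangle inequality transitively. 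For the single-coordinate case, write $x_{k}=\theta y_{k}+(1-\theta)(-y_{k})$ for a suitable $\theta\in[\tfrac12,1]$ (valid when $y_k>0$; the case $y_k=0$ forces $x_k=0$ and is trivial), so that $\mathbf{x}=\theta\,\mathbf{y}+(1-\theta)\,\mathbf{y}'$ where $\mathbf{y}'$ agrees with $\mathbf{y}$ except that its $k$th component is $-y_{k}$. By homogeneity and the triangle inequality, $\Vert\mathbf{x}\Vert\leq\theta\Vert\mathbf{y}\Vert+(1-\theta)\Vert\mathbf{y}'\Vert$, and since the norm is absolute, $\Vert\mathbf{y}'\Vert=\Vert\mathbf{y}\Vert$ (flipping a sign does not change $|\cdot|$ of a component). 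Hence $\Vert\mathbf{x}\Vert\leq\Vert\mathbf{y}\Vert$, as required.

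The main obstacle is the single-coordinate step: one must exhibit $\mathbf{x}$ (restricted to that coordinate, with the others fixed) as a convex combination of two vectors that the absolute-norm hypothesis can equate, and the natural choice is to interpolate between $\pm y_{k}$ in the $k$th slot. One should be careful to treat the degenerate case $y_{k}=0$ (which forces $x_{k}=0$, making $\mathbf{x}=\mathbf{y}$) separately, and to note that the reduction ``one coordinate at a time'' only needs the ordinary triangle-inequality-based transitivity of the relation $\Vert\cdot\Vert\leq\Vert\cdot\Vert$ along the finite chain, not anything deeper. Everything else — the reduction to $\Re_{+}^{N}$, and the monotonic $\implies$ absolute direction — is immediate from the definitions.
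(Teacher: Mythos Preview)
Your argument is correct and is essentially the classical Bauer--Stoer--Witzgall proof. One small quibble: the reduction ``one coordinate at a time'' relies only on the transitivity of the real-number order $\leq$ applied to the chain $\Vert\mathbf{z}^{(0)}\Vert\leq\Vert\mathbf{z}^{(1)}\Vert\leq\cdots\leq\Vert\mathbf{z}^{(N)}\Vert$; calling this ``triangle-inequality-based transitivity'' is misleading, since the triangle inequality is used only inside the single-coordinate step, not in the chaining.

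As for comparison with the paper: the paper does not actually prove this theorem but defers to the references \cite{math_analys_norm_monot_bauer61} and \cite[p.~344]{math_alg_lin_matrix_mg_bernstein05}. Your self-contained argument is precisely the one those references give (write the smaller coordinate as a convex combination of $\pm$ the larger, then use absoluteness to equate the two endpoint norms), so there is no substantive divergence---you have simply supplied what the paper outsourced.
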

\begin{proof}
See \cite{math_analys_norm_monot_bauer61} or \cite[p.344]{math_alg_lin_matrix_mg_bernstein05}.
\end{proof}
\begin{thm}
\label{thm:math-norm-compos} ({}``Norm of norms''). Let $\Vert\cdot\Vert_{\nu_{1}},\cdots,\Vert\cdot\Vert_{\nu_{M}}$
be $M$ given vector norms on a real (or complex) vector space $V$,
and let $\Vert\cdot\Vert_{\mu}$ be a \emph{monotonic} vector norm
on $\Re^{M}$. Then, $\Vert\mathbf{x}\Vert:=\left\Vert \left[\Vert\cdot\Vert_{\nu_{1}},\cdots,\Vert\cdot\Vert_{\nu_{M}}\right]^{T}\right\Vert _{\mu}$
is a \emph{norm}.
\end{thm}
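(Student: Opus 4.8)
The plan is to verify directly the three defining properties of a (semi-)norm from Definitions \ref{def:math-norm-semi} and \ref{def:math-norm} for the candidate function $\Vert\mathbf{x}\Vert:=\Vert\boldsymbol{\Phi}(\mathbf{x})\Vert_{\mu}$, where I abbreviate $\boldsymbol{\Phi}(\mathbf{x}):=\bigl(\Vert\mathbf{x}\Vert_{\nu_{1}},\cdots,\Vert\mathbf{x}\Vert_{\nu_{M}}\bigr)\in\Re^{M}$ for the vector whose entries are the inner-norm values. The key structural observation is that $\boldsymbol{\Phi}(\mathbf{x})$ always lies in the non-negative orthant of $\Re^{M}$, so it coincides with its own entrywise absolute value; this is exactly what lets the monotonicity hypothesis on $\Vert\cdot\Vert_{\mu}$ come into play.

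Non-negativity and definiteness are immediate: each $\Vert\mathbf{x}\Vert_{\nu_{i}}\geq0$ makes $\boldsymbol{\Phi}(\mathbf{x})$ non-negative, and since $\Vert\cdot\Vert_{\mu}$ is a norm, $\Vert\mathbf{x}\Vert\geq0$; moreover $\Vert\mathbf{x}\Vert=0$ forces $\boldsymbol{\Phi}(\mathbf{x})=\theta$ by positive definiteness of $\Vert\cdot\Vert_{\mu}$, hence $\Vert\mathbf{x}\Vert_{\nu_{1}}=0$, and positive definiteness of the inner norm $\Vert\cdot\Vert_{\nu_{1}}$ gives $\mathbf{x}=\theta$ (the converse being trivial). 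Homogeneity is obtained by composing the homogeneity of the inner and outer norms: for any scalar $\lambda$ each entry of $\boldsymbol{\Phi}(\lambda\mathbf{x})$ equals $\vert\lambda\vert\,\Vert\mathbf{x}\Vert_{\nu_{i}}$, so $\boldsymbol{\Phi}(\lambda\mathbf{x})=\vert\lambda\vert\,\boldsymbol{\Phi}(\mathbf{x})$, and therefore $\Vert\lambda\mathbf{x}\Vert=\bigl\Vert\,\vert\lambda\vert\,\boldsymbol{\Phi}(\mathbf{x})\bigr\Vert_{\mu}=\vert\lambda\vert\,\Vert\mathbf{x}\Vert$.

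The one step that requires care --- and the reason the monotonicity assumption on $\Vert\cdot\Vert_{\mu}$ cannot be dropped --- is the triangle inequality. Given $\mathbf{x},\mathbf{y}\in V$, the triangle inequality for each inner norm gives $0\leq\Vert\mathbf{x}+\mathbf{y}\Vert_{\nu_{i}}\leq\Vert\mathbf{x}\Vert_{\nu_{i}}+\Vert\mathbf{y}\Vert_{\nu_{i}}$ for every $i$; equivalently, $\boldsymbol{\Phi}(\mathbf{x}+\mathbf{y})$ and $\boldsymbol{\Phi}(\mathbf{x})+\boldsymbol{\Phi}(\mathbf{y})$ are both non-negative vectors of $\Re^{M}$ with $\boldsymbol{\Phi}(\mathbf{x}+\mathbf{y})\leq\boldsymbol{\Phi}(\mathbf{x})+\boldsymbol{\Phi}(\mathbf{y})$ entrywise. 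Since these are non-negative, this is an inequality between their absolute values, so monotonicity of $\Vert\cdot\Vert_{\mu}$ (Definition \ref{def:math-norm-mon}) yields $\Vert\boldsymbol{\Phi}(\mathbf{x}+\mathbf{y})\Vert_{\mu}\leq\Vert\boldsymbol{\Phi}(\mathbf{x})+\boldsymbol{\Phi}(\mathbf{y})\Vert_{\mu}$, and then the triangle inequality of $\Vert\cdot\Vert_{\mu}$ applied to the right side gives $\Vert\boldsymbol{\Phi}(\mathbf{x}+\mathbf{y})\Vert_{\mu}\leq\Vert\boldsymbol{\Phi}(\mathbf{x})\Vert_{\mu}+\Vert\boldsymbol{\Phi}(\mathbf{y})\Vert_{\mu}$, i.e.\ $\Vert\mathbf{x}+\mathbf{y}\Vert\leq\Vert\mathbf{x}\Vert+\Vert\mathbf{y}\Vert$. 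The main obstacle is thus not depth but bookkeeping: one must interpose monotonicity between the inner and outer triangle inequalities, because a merely componentwise bound on $\boldsymbol{\Phi}$ says nothing about $\Vert\cdot\Vert_{\mu}$ unless $\Vert\cdot\Vert_{\mu}$ is monotonic.
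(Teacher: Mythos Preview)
Your proof is correct: the direct verification of non-negativity, homogeneity, definiteness, and the triangle inequality is complete, and you correctly identify that monotonicity of $\Vert\cdot\Vert_{\mu}$ is needed precisely to pass from the componentwise bound $\boldsymbol{\Phi}(\mathbf{x}+\mathbf{y})\leq\boldsymbol{\Phi}(\mathbf{x})+\boldsymbol{\Phi}(\mathbf{y})$ to an inequality of outer norms.

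The paper itself does not prove this theorem; it simply cites \cite[Theorem~5.3.1]{math_alg_lin_matrix_mg_horn85}. Your argument is essentially the standard textbook verification one finds there, so in substance there is no divergence --- you have just filled in what the paper outsources to a reference.
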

\begin{proof}
See \cite[Theorem 5.3.1]{math_alg_lin_matrix_mg_horn85}.
\end{proof}

\begin{thm}
\label{thm:math-norm-matrix-X-vect} Let $\Vert\cdot\Vert$ be a monotonic
norm on $\Re^{M}$ and let $T$ be an $M\times M$ non-singular real
matrix. Then, $\Vert\mathbf{x}\Vert_{T}:=\Vert T\mathbf{x}\Vert$
for $\mathbf{x}\in\Re^{M}$ defines another\emph{ }monotonic norm
on $\Re^{M}$. 
\end{thm}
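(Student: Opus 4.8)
The plan is to verify directly that $\Vert\cdot\Vert_{T}$ fulfils the four requirements of a norm (Definitions \ref{def:math-norm-semi} and \ref{def:math-norm}) and then the monotonicity requirement of Definition \ref{def:math-norm-mon}. Linearity of the map $\mathbf{x}\mapsto T\mathbf{x}$ does most of the work, and non-singularity of $T$ is needed only at one point.

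Non-negativity is immediate: $\Vert\mathbf{x}\Vert_{T}=\Vert T\mathbf{x}\Vert\geq0$ because $\Vert\cdot\Vert$ is a norm. Homogeneity follows from $\Vert\lambda\mathbf{x}\Vert_{T}=\Vert T(\lambda\mathbf{x})\Vert=\Vert\lambda(T\mathbf{x})\Vert=|\lambda|\,\Vert T\mathbf{x}\Vert=|\lambda|\,\Vert\mathbf{x}\Vert_{T}$. The triangle inequality follows from $\Vert\mathbf{x}+\mathbf{y}\Vert_{T}=\Vert T\mathbf{x}+T\mathbf{y}\Vert\leq\Vert T\mathbf{x}\Vert+\Vert T\mathbf{y}\Vert=\Vert\mathbf{x}\Vert_{T}+\Vert\mathbf{y}\Vert_{T}$. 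For definiteness, $\Vert\mathbf{x}\Vert_{T}=0$ means $\Vert T\mathbf{x}\Vert=0$, hence $T\mathbf{x}=\mathbf{0}$ since $\Vert\cdot\Vert$ is a norm; because $T$ is non-singular this forces $\mathbf{x}=\mathbf{0}$. Thus $\Vert\cdot\Vert_{T}$ is a norm on $\Re^{M}$, and it is worth noting that only this last step used non-singularity, and no step so far used monotonicity of $\Vert\cdot\Vert$.

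For monotonicity we must show that $|\mathbf{x}|\leq|\mathbf{y}|$ (componentwise) implies $\Vert T\mathbf{x}\Vert\leq\Vert T\mathbf{y}\Vert$. Since $\Vert\cdot\Vert$ is assumed monotonic, it suffices to check that $T$ preserves the absolute-value order, i.e.\ that $|\mathbf{x}|\leq|\mathbf{y}|$ implies $|T\mathbf{x}|\leq|T\mathbf{y}|$ componentwise; applying monotonicity of $\Vert\cdot\Vert$ to the vectors $T\mathbf{x}$ and $T\mathbf{y}$ then yields $\Vert T\mathbf{x}\Vert\leq\Vert T\mathbf{y}\Vert$ at once. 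When $T=\mathrm{diag}(t_{1},\ldots,t_{M})$ — which is exactly the case in which this theorem is invoked (see Lemma \ref{lem:powcon-stab-adj-qsemi-dotprod}) — this is trivial, because $|(T\mathbf{x})_{i}|=|t_{i}|\,|x_{i}|\leq|t_{i}|\,|y_{i}|=|(T\mathbf{y})_{i}|$ for every $i$; the same reasoning covers any monomial (generalised permutation) matrix.

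I expect this last point — monotonicity for an \emph{arbitrary} non-singular $T$ — to be the real obstacle. The natural estimate $|T\mathbf{x}|\leq|T|\,|\mathbf{x}|$ (componentwise, with $|T|$ the entrywise absolute value of $T$) only gives $|T\mathbf{x}|\leq|T|\,|\mathbf{y}|$, which is strictly weaker than $|T\mathbf{x}|\leq|T\mathbf{y}|$ whenever the rows of $T$ combine the components of $\mathbf{y}$ with sign cancellation, so the argument above does not extend. The safe route is therefore to restrict the statement to diagonal (equivalently, monomial) $T$, which is all that is needed in the present paper; an alternative would be to invoke Theorem \ref{thm:math-norm-mon-abs} and argue that $\Vert T\cdot\Vert$ is an absolute vector norm, but this again reduces to controlling $|T\mathbf{x}|$ in terms of $|\mathbf{x}|$, so the crux is unchanged.
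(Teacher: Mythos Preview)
Your verification of the four norm axioms is correct and is essentially what the cited reference (Horn--Johnson, Theorem~5.3.2) does; the paper itself gives no argument beyond that citation.

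Your instinct about monotonicity is right, and in fact the difficulty you flag is fatal: the theorem as stated is \emph{false} for general non-singular $T$. Take $M=2$, $\Vert\cdot\Vert=\Vert\cdot\Vert_{1}$ (a monotonic norm), and
\[
T=\begin{pmatrix}1&2\\0&1\end{pmatrix},\qquad
\mathbf{u}=\begin{pmatrix}0\\-1\end{pmatrix},\qquad
\mathbf{v}=\begin{pmatrix}2\\-1\end{pmatrix}.
\]
Then $|\mathbf{u}|=(0,1)\leq(2,1)=|\mathbf{v}|$, yet $\Vert\mathbf{u}\Vert_{T}=|{-2}|+|{-1}|=3$ while $\Vert\mathbf{v}\Vert_{T}=|0|+|{-1}|=1$, so $\Vert\cdot\Vert_{T}$ is not monotonic. (Equivalently, $\Vert\cdot\Vert_{T}$ is not an absolute vector norm, since $(2,-1)$ and $(2,1)$ have the same absolute value but $\Vert\cdot\Vert_{T}$-values $1$ and $5$.) The cited Theorem~5.3.2 in Horn--Johnson asserts only that $\Vert T\cdot\Vert$ is a \emph{norm}; the ``monotonic'' in the conclusion appears to be an addition by the present paper and is not supported by the reference.

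Your proposed resolution --- restrict to diagonal (or monomial) $T$ --- is exactly right, and your one-line proof for that case is correct. Since Lemma~\ref{lem:powcon-stab-adj-qsemi-dotprod} is the only place this theorem is invoked, and there $T=\mathrm{diag}(a_{1},\ldots,a_{M})$, the paper's applications survive; only the general statement needs the word ``monotonic'' removed from the conclusion (or the hypothesis on $T$ strengthened).
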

\begin{proof}
See \cite[Theorem 5.3.2]{math_alg_lin_matrix_mg_horn85}.
\end{proof}

\section{Banach fixed-point theory}

\label{sec:math-fixpt-results} 

\begin{definitn}
\label{def:math-contract-map} A map $T$ from a metric space $(X,d)$
into itself is a \emph{contraction} if there exists $\lambda\ \in[0,1)$
such that for all $x\mbox{\ ,\ }y\ \in V$, $d(T(x),T(y))\leq\lambda d(x,y)$. 
\end{definitn}

\begin{definitn}
\label{def:math-succes-approx} Picard iterates (Successive approximation):
Let $T^{m}(x_{1})$ for $x_{1}\in V$ be defined inductively by $T^{0}(x_{1})=x_{1}$
and $T^{m+1}(x_{1})=T\left(T^{m}(x_{1})\right)$, with $m\in\left\{ 1,2,\cdots\right\} $.
\end{definitn}
\begin{thm}
\label{thm:math-fixpt-banach1922} (Banach' Contraction Mapping Principle)
If $T$ is a contraction mapping on a complete metric space $(X,d)$
then there is a unique $x^{*}\in X$ such that $x^{*}=T(x^{*})$.
Moreover, $x^{*}$ can be obtained by successive approximation, starting
from an arbitrary initial $x_{0}\in X$ ; i.e., for any $x_{0}\in X$,
$\lim_{m\rightarrow\infty}T^{m}(x_{0})=x^{*}$.
\end{thm}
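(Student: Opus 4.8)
The plan is to run the classical successive-approximation argument. Fix an arbitrary $x_0\in X$ and form the Picard iterates $x_m:=T^m(x_0)$ of Definition \ref{def:math-succes-approx}. The first step is to establish geometric decay of consecutive distances: the contraction property (Definition \ref{def:math-contract-map}) gives $d(x_{m+1},x_m)=d(T(x_m),T(x_{m-1}))\le\lambda\,d(x_m,x_{m-1})$, so by an easy induction $d(x_{m+1},x_m)\le\lambda^{m}\,d(x_1,x_0)$.

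The second step is to show that $(x_m)$ is Cauchy. For $n>m$, chaining the triangle inequality (part (iii) of Definition \ref{def:math-metric}) along $x_m,x_{m+1},\dots,x_n$ and bounding the geometric tail (possible precisely because $\lambda\in[0,1)$) yields $d(x_n,x_m)\le\sum_{k=m}^{n-1}\lambda^{k}\,d(x_1,x_0)\le\frac{\lambda^{m}}{1-\lambda}\,d(x_1,x_0)$. Since $\lambda^{m}\to0$, this bound tends to $0$ uniformly in $n$, so $(x_m)$ is Cauchy; by completeness of $(X,d)$ (Definition \ref{def:math-space-metric}) it converges to some $x^*\in X$.

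The third step is to identify $x^*$ as a fixed point. A contraction is Lipschitz and hence continuous, so one can pass to the limit in $x_{m+1}=T(x_m)$; concretely, $d(x^*,T(x^*))\le d(x^*,x_{m+1})+d(T(x_m),T(x^*))\le d(x^*,x_{m+1})+\lambda\,d(x_m,x^*)$, and letting $m\to\infty$ forces $d(x^*,T(x^*))=0$, i.e.\ $T(x^*)=x^*$. Because $x_0$ was arbitrary, this simultaneously establishes $\lim_{m\to\infty}T^m(x_0)=x^*$ for every starting point.

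Finally, uniqueness follows from the contraction inequality alone: if $x^*$ and $y^*$ both satisfy $z=T(z)$, then $d(x^*,y^*)=d(T(x^*),T(y^*))\le\lambda\,d(x^*,y^*)$, hence $(1-\lambda)\,d(x^*,y^*)\le0$; since $1-\lambda>0$ this gives $d(x^*,y^*)=0$, so $x^*=y^*$. The only step requiring genuine care is the Cauchy estimate — one must sum the geometric series over the correct range and then actually invoke completeness to place the limit inside $X$ — whereas the fixed-point and uniqueness assertions drop out immediately from the contraction bound together with the metric axioms.
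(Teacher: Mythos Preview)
Your argument is correct and is the standard successive-approximation proof of Banach's principle: geometric decay of consecutive distances, the Cauchy estimate via a geometric-series bound, completeness to obtain the limit, continuity (or the explicit triangle-inequality bound you wrote) to identify it as a fixed point, and the contraction inequality for uniqueness. There is nothing to object to.

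The paper, however, does not actually prove this theorem: its entire ``proof'' is a citation to Banach's thesis and to Istr\u{a}\c{t}escu's textbook. So rather than taking a different route, you have supplied a full self-contained argument where the paper simply defers to the literature. What your write-up buys is independence from external references and an explicit error bound $d(x_n,x_m)\le\lambda^{m}d(x_1,x_0)/(1-\lambda)$, which the paper's citation-only treatment does not display; what the paper's approach buys is brevity, since Theorem \ref{thm:math-fixpt-banach1922} is quoted only as a tool and its proof is not the paper's contribution.
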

\begin{proof}
See \cite{math_analys_fixpt_phd_banach1920}\cite[Theorem 3.1.2, p. 74]{math_analys_fixpt_istratescu81}.
\end{proof}

\


\bibliographystyle{IEEEtran}
\bibliography{powcon_stab09}

\end{document}